\documentclass[10pt,oneside]{article}
\usepackage{amsfonts,amssymb,graphicx}
\usepackage{epstopdf}
\usepackage{cite}
\linespread{1.5}
\setlength{\textwidth}{16.0cm}
\setlength{\textheight}{22.3cm}
\setlength{\topmargin}{-1.0cm}
\setlength{\oddsidemargin}{-1mm}
\setlength{\evensidemargin}{-1mm}
\usepackage{indentfirst}

\usepackage{amsthm}
\usepackage{amsmath}
\usepackage{latexsym}
\usepackage{amsthm}

\usepackage{amsmath, amssymb, mathrsfs, dsfont, mathtools, empheq,bm,bbm}
\usepackage{color}

%%%%%%%%%%%%%%%%%
\def\be{\begin{equation}}
\def\ee{\end{equation}}
%%%%%%%%%%%%%%%%%

\newtheorem{theorem}{Theorem}[section]
\newtheorem{prop}[theorem]{Proposition}
\newtheorem{lemma}{Lemma}
\newtheorem{remark}{Remark}
\newtheorem{definition}{Definition}[section]

%%%%%%%%%%%%%5
\newcommand{\PP}{\mathbb{P}}
\newcommand{\R}{\mathbb{R}}
\newcommand{\E}{\mathbb{E}}
\newcommand{\N}{\mathbb{N}}
\newcommand{\balpha}{\bm{\alpha}}
\newcommand{\bbeta}{\bm{\beta}}

\newcommand{\bsigma}{\bm{\sigma}}

\begin{document}

\begin{center}
\vspace{1truecm}
{\bf\Large A Multi-Scale Spin-Glass Mean-Field Model}\\
\vspace{1cm}
{Pierluigi Contucci, Emanuele Mingione}\\
\vspace{.5cm}
{\small Dipartimento di Matematica} \\
{\small Universit\`a di Bologna, 40127 Bologna, Italy}\\
%%%
\vskip 1truecm
\end{center}
\vskip 1truecm
\begin{abstract}\noindent

\noindent In this paper a multi-scale version of the Sherrington and Kirkpatrick model is introduced and studied. The pressure per particle in the thermodynamical limit is proved to obey a variational principle of Parisi type. The result is achieved by means of lower and upper bounds. The lower bound is obtained with a Ruelle cascade using the interpolation technique, while the upper bound exploits factorisation properties of the equilibrium measure and the synchronisation technique.
\end{abstract}

\noindent {\bf Keywords:} Spin glasses, Sherrington-Kirkpatrick model, multi-scale decomposition\\

%\section{Introduction}
%
%In the recent years several important progress  in Spin Glass Theory has been achieved. Behind the rigorous proof  of Parisi solution for the Sherringhton Kirkpatrick model, other mean-field like disordered models has been recently solved. The main ideas generalized interpolation methods ultrametric property and sinchronization mechanism

\section{The Multiscale SK model}

The equilibrium statistical mechanics of a general disordered system can be described in between two prescriptions known, in the literature, as quenched and annealed. The spin-glass phase, for instance, is described by the quenched measure where the random coupling disorder is kept fixed while the spins are thermalised according to the Boltzmann distribution. This perspective is considered physically relevant because the relaxation time of the disorder interaction variables is much slower than the one for the spin variables. Conversely in the annealed prescription the disorder variables thermalise together with the field ones. In a paper by Talagrand \cite{tal} on mean field spin glasses it was shown how to define a generalised equilibrium measure depending on a real positive number $\zeta$ with thermodynamic pressure
\be\label{tala}
P \; = \; \frac{1}{\zeta}\log \E Z_J^\zeta \; ,
\ee
where $Z_J$ is the partition function, a random variable depending on the disorder $J$ obtained integrating on the spins. The origins of this description are to be found on the replica approach to spin glasses \cite{MPV} where $\zeta$ is an integer. In \cite{BGM} $\zeta$ was treated as a {\it scale} parameter in the unit interval to interpolate a general disordered system from the quenched case, obtained when $\zeta\to 0$, to the annealed one reached at $\zeta \to 1$.

In this paper we generalise the idea of \eqref{tala} and consider a multi-scale equilibrium measure obtained by successive independent integration on suitably defined Gaussian couplings. The idea to study a system at different energy scales is common in mathematical-physics at least since the early days of the Euclidian approach to renormalisation group in quantum field theory (see \cite{Pol,Gal}). Recalling the basic concepts, a single scale model is defined as in \eqref{tala}. For two scales
%i.e.:
%\be
%e^{\xi_0 P} \; = \;  \E_0 Z_J^{\xi_0} \; .
%\ee
$\zeta_0$ and $\zeta_1$ the model is defined in terms of an interaction $J=(J_0,J_1)$ with independent components:
\be
e^{\zeta_1 P^{(0)}} \; = \;  \E_1 Z_J^{\zeta_1} \; ,
\ee
and
\be
e^{\zeta_0 P} \; = \;  \E_0\, e^{\zeta_0P^{(0)}} \; .
\ee
For $r$ scales $\zeta_0< \zeta_1<\ldots<\zeta_{r-1} < \zeta_{r}\,=\,1$
the recursion relations are
\be\label{hier}
e^{\zeta_{l} P^{(l-1)}} \; = \;  \E_l\, e^{\zeta_lP^{(l)}} \; ,
\ee
where $0 \le l \le r$, $\E_r\,e^{P^{(r)}}=Z_J$ and $P=P^{(-1)}$.

The use of a multi-scale decomposition structure in the spin-glass problem made its first appearance with the celebrated work by Guerra on the Sherrington-Kirkpatrick model \cite{broken} where the covariance of a one-body exactly solvable system is split in many layers. The same method was later used by Talagrand in his complete proof, the upper bound, of the Parisi formula for the free energy density of the model \cite{tala2}. The idea to use the recursive structure introduced above appeared also in the theoretical physics literature. In \cite{Mon} the author uses it to investigate the properties of metastable states in a glassy system. In \cite{cuku,cuku2} the authors introduce a multi-bath equilibrium showing that it can be used to describe the correlations and response functions for a class of dynamical systems in the limit of small entropy production.

What we propose here is a generalised mean-field model where a multi-scale structure is part of the model itself and involves the interacting covariance.

A fundamental tool throughout this work, that we will use to study the multi-scale equilibrium measure defined by \eqref{hier}, are the Ruelle Probability Cascades (RPC) \cite{Ruelle, Derrida} whose use is consolidated in the spin-glass literature \cite{Bol, BolKistler, Bovierku, pan0}. A short appendix on RPC is provided at the end to make this work self-contained.

The main definitions follow.

Given $N \geq 1$ let us consider a  system  of $N$ spins $\sigma=(\sigma_i)_{\,i\leq N }\in\Sigma_{N}=\{-1,1\}^N$.  Fix an integer $r\geq1$ and  denote by $\balpha\in\N^r$  an additional degree of freedom. A configuration of the system is

\be\label{conf}
\bsigma\,=\,(\sigma,\balpha)\in \Sigma_N\times\N^r\,\equiv\,\Sigma_{N,r}
\ee

Consider a sequence  $\zeta=(\zeta_l)_{l\leq r}$ such that

\be\label{zetaseq}
0=\zeta_{-1}<\zeta_0< \zeta_1<\ldots<\zeta_{r-1} < \zeta_{r}\,=\,1
\ee

and let $(\nu_{\balpha})_{\balpha\in\N^{r}}$ be the random weights of Ruelle Probability Cascade associated to the sequence $\zeta$ (see Appendix \ref{RPC}). For $\balpha,\bbeta\in \N^r$ we denote

\be\label{wegge}
\balpha\wedge\bbeta\, = \min\,\{0\leq l \leq r\,|\,\alpha_1 = \beta_1,\ldots , \alpha_l = \beta_l, \alpha_{l+1} \neq \beta_{l+1}\}
\ee

where $\balpha\wedge\bbeta=r$ if $\balpha=\bbeta$. It's useful to think $\N^r $ as the set of leaves of an infinite tree $\mathcal{A}=\N^0\cup\N\cup\N^2\ldots\cup \N^r $  of depth $r$ and root
$\N^0=\{\emptyset\}$. Then  $\balpha\wedge\bbeta$ denotes  the level of their common ancestor, see \eqref{dwedge}.

Fix a sequence $\gamma=(\gamma_l)_{l\leq r}$ such that

\be\label{gammaseq}
0=\gamma_0<\gamma_1<\ldots<\gamma_{r} < \infty
\ee

and let $\Big(g(\balpha)\Big)_{\balpha\in\N^r}$ be a family of centered gaussian random variables with covariance
\be
\E \,g(\balpha^1)\,g(\balpha^2) = \,\left(\gamma_{\balpha^1\wedge\balpha^2}\right)^2\,
\ee

Consider a gaussian process $H_N$ on $\Sigma_{N,r}$ defined by

\be\label{hamrpc}
H_{N}(\bsigma)\,=\,\frac{1}{\sqrt{N}}\,\sum_{i,j=1}^N\,g_{ij}(\balpha)\,\sigma_i\sigma_j
\ee

where $\bsigma=(\sigma,\balpha)\in\Sigma_N\times \N^r$  and  $\left(g_{ij}(\balpha)\right)_{i,j=1,\ldots ,N}$ is a family of i.i.d. copies of $g(\balpha)$.\\

Given two configurations $\bsigma^1=(\sigma^1,\balpha^1),\,\bsigma^2=(\sigma^2,\balpha^2)\in\Sigma_{N,r}$  the covariance of the process $H_{N}$ is

\be\label{covH}
\E \,{H}_N(\bsigma^1)\,{H}_N(\bsigma^2) \,=\, N\,\left(c_{N,\gamma}(\bsigma^1,\bsigma^2)\right)^2
\ee

where
\be\label{overl}
c_{N,\gamma}\left(\bsigma^1,\bsigma^2\right)\,=\,{\gamma}_{\balpha^1\wedge\balpha^2}\,q_N(\sigma^1,\sigma^2)
\ee

and
\be\label{overlap}
q_N(\sigma^1,\sigma^2)\,=\,\frac{1}{N}\sum^N_{i=1}\,\sigma^1_i\,\sigma^2_i
\ee

is  the usual  \textit{overlap} between two configurations $\sigma^1,\sigma^2\in\Sigma_N$.
Notice that  $q_N(\sigma^1,\sigma^2)\in[-1,1]$ and $\balpha\wedge\balpha=r$ imply that   $c_{N,\gamma}(\bsigma^1,\bsigma^2)\in[-\gamma_r,\gamma_r]$.\\

We denote by $\beta=(\zeta,\gamma)$ the couple of sequences in \eqref{zetaseq} and \eqref{gammaseq}. Given $\beta$ we by $p_N(\beta)$ the  \textit{quenched pressure density } of the Multiscale SK model, defined as

\be\label{multiscalep}
p_N(\beta)= \frac{1}{N}\,\E\,\log Z_{N}(\beta)
\ee

where
\be
Z_ {N}(\beta)\,=\,\sum_{\bsigma\in\Sigma_{N,r}}\, \nu_{\balpha}\,e^{\,H_N(\bsigma)}
\ee

We notice that $p_N(\beta)$ in \eqref{multiscalep} can be also defined recursively. Let  $H_N(\sigma,l)$ be a gaussian process on  $(\sigma,l)\in\Sigma_N\times \{1,\ldots,r\}$ with covariance

\be\label{coint}
\E \,H_N(\sigma^1,l) H_N(\sigma^2,l') \,=\, N\,\delta_{l,l'}\,\left(\sqrt{\,\gamma^2_l-\gamma^2_{l-1}\,}\, q_N(\sigma^1,\sigma^2)\right)^2\,
\ee

Then, by  the property \eqref{rpcinv} of the RPC, it holds
\be\label{multiscalep2}
p_N(\beta)= \frac{1}{N}\,\log Z_{0,N}(\beta)
\ee

where  $Z_{0,N}$ is obtained recursively in the following way. We denote
by $\E_l$ denotes the average w.r.t. the randomness in $H_N(\sigma,l+1)$  and starting from

\be
Z_ {r,N}(\beta)\, =\,\sum_{\sigma}\, \prod_{1\leq l\leq r}\,e^{\,H_N(\sigma,l)}
\ee

we define
\be\label{recZ}
Z^{\zeta_{l-1}}_{l-1,N}\,=\,\E_{l-1}\,Z^{\zeta_{l-1}}_{l,N}
\ee
for any $0\leq l\leq r-1$.\\

For $r=1$ and a generic $\zeta_0$ the model was studied and solved by Talagrand in \cite{tal}. If $\zeta_0 \to 0$ we recover the SK model at inverse temperature $\gamma_1$.\\

\section{Main result}

The quenched pressure density $p_N$  in \eqref{multiscalep} is completely determined by the choice of  $\beta =(\zeta,\gamma)$. From now on  $r$ denotes the integer that defines the sequences $\zeta$ and $\gamma$
in \eqref{zetaseq} and \eqref{gammaseq}.\\

Consider an arbitrary integer $k\geq r$ and a sequence $\xi=(\xi_j)_{j\leq k}$ such that
\be\label{newseq}
0=\xi_{-1}<\xi_{0} < \xi_{1}< \ldots < \xi_k\,=\,1\
\ee

Moreover we assume that
\be\label{spacecon}
\zeta\subseteq\xi
\ee

It's useful to think $\xi$ as a the image of some  discrete distribution function. In other words given an arbitrary sequence $c=(c_j)_{j\leq k}$ such that

\be\label{newseq2}
0<c_{0} < c_{1}< \ldots < c_k<\infty
\ee

we say that a random variable $C$ taking values on the set $c$ has distribution $\xi$ if

\be\label{pdis}
\PP(C\,=\,c_j)\,=\,\xi_j- \xi_{j-1}
\ee

for any $j\leq k$. Any couple of sequences $(\xi,c)$ satisfying \eqref{newseq} and \eqref{newseq2} combined with the relation \eqref{pdis} determines an element of $\mathcal{M}[0,c_k]$ where $\mathcal{M}[0,c_k]$ denotes the set of all  distribution functions on $[0,c_k]$. However in our case the additional condition \eqref{spacecon} implies that we look at a particular subset of $\mathcal{M}[0,c_k]$.

\begin{definition}\label{funspace}
We denotes by  $\mathcal{M}_\zeta[0,c_k]$ the set of all  distribution function  $F$   on $[0,c_k]$ such that
the sequence $\zeta$ is contained in the image of $F$.
\end{definition}

Notice that if $F$ is a discrete distribution   on $[0,c_k]$  then it can be identified with a couple
$(\xi,c)$ satisfying \eqref{newseq} and \eqref{newseq2} and  the above definition implies that

\be\label{bf}
F\in \mathcal{M}_\zeta[0,c_k]\,\Leftrightarrow\, \zeta\subseteq \xi
\ee

Now given the sequence $\xi$ in \eqref{newseq} satisfying \eqref{spacecon}   consider  the following  subset of $\{0,\ldots,k\}$

\be\label{mergeset}
K_l\,=\,\left\{\,j\,:\, \zeta_{l-1}< \xi_j\leq\zeta_{l},\,0\leq j\leq k\right\}
\ee

for any $l\leq r$. Given the sequence $\gamma$ in \eqref{gammaseq} we construct a new sequence $\widetilde{\gamma}=(\widetilde{\gamma}_j)_{ j\leq k}$ defining for any  $j\leq k$

\be\label{seqq}
\widetilde{\gamma}_j\,=\,\gamma_l\,\,\,\, \mathrm{if}\,\,\, j\in K_l
\ee

We also introduce  an arbitrary sequence $q=(q_j)_{j\leq k}$ such that

\be\label{seqc}
0\,=\,q_0  \leq q_{1} \leq \ldots \leq q_{k}=1
\ee

\begin{definition}\label{spacef}
We denote $X_{\beta}$ the set of all  $x=(\xi,\widetilde{\gamma},q)$  such that
$\xi$ satisfies \eqref{newseq} and \eqref{spacecon} while  $\widetilde{\gamma}$ and $q$ are defined in
\eqref{seqq}  and \eqref{seqc} respectively.
\end{definition}

Given $x=(\xi,\widetilde{\gamma},q)\in X_\beta$,  consider the  sequence $c=(c_j)_{j\leq k}$ 
where $c_j=\widetilde{\gamma}_j\,q_j$ for any $j\leq k$. Then, from a physical point of view,   the couple $(\xi,c)$ associated to a suitable $x\in X_{\beta}$,  represents the distribution of the overlap $c_N$ in \eqref{overl} w.r.t. the  Gibbs measure in the thermodynamic limit.

Let $(J_j)_{1\leq j\leq k}$ be a collection of i.i.d. standard gaussian random variables and define
\be
Z_{k}\,=\,2\cosh\left(\sqrt{2}\,\sum_{1\leq j\leq k}\,J_p\,\Big(\widetilde{\gamma}_j^2 q_j- \widetilde{\gamma}_{j-1}^2q_{j-1}\Big)^{1/2}\right)
\ee

and recursively for $0\leq j \leq k-1$

\be\label{recZ}
Z^{\xi_{j-1}}_{j-1}\,=\,\E_{j-1}\,Z^{\xi_{j-1}}_{j}
\ee

where $\E_j$ denotes the average w.r.t. $J_{j+1}$.\\

For any $x\in X_{\beta}$ we define the Parisi functional for the Multiscale SK model the quantity

\be\label{parisifun}
\mathcal{P}_{\beta}(x)\,=\,\log\,Z_0\,-\,\frac{1}{2}\,\sum_{0\leq j\leq k-1}\,\xi_j\,\Big(
(\widetilde{\gamma}_{j+1}q_{j+1})^2-(\widetilde{\gamma}_{j}q_{j})^2\Big)
\ee

Using \eqref{rpcinv} one can prove that the Parisi functional
\eqref{parisifun} has another  useful representation.  Let $(\nu_{\balpha} )_{\balpha\in \N^k}$ the random weights of the RPC with parameter $\xi$. Consider  two independent gaussian process $z,y$  indexed by $\balpha\in \N^k$ with covariances
\begin{eqnarray}\label{htree}
 \E \,z(\balpha^1)\,z(\balpha^2) &=& 2\,(\widetilde{\gamma}_{\balpha^1\wedge\balpha^2})^2\,q_{\balpha^1\wedge\balpha^2} \\
 \E \,y(\balpha^1)\,y(\balpha^2) &=& \big( \widetilde{\gamma}_{\balpha^1\wedge\balpha^2}\,q_{\balpha^1\wedge\balpha^2}\big)^2
\end{eqnarray}

Hence it holds

\be\label{parisifun2}
\mathcal{P}_{\beta}(x)\,=\, \E \log\,\sum_{\balpha\in\N^k}\,\nu_{\balpha}\,2\cosh z(\balpha)\,-\,\E \log\,\sum_{\balpha\in\N^k}\,\nu_{\balpha}\,2\exp y (\balpha)
\ee

The main result of this work is the following

\begin{theorem}\label{upbound}
The thermodynamic limit of the quenched pressure density of the Multiscale SK model $p_N(\beta)$ in
\eqref{multiscalep} exists and is given by
\be\label{main}
\lim_{N\to\infty}p_{N}(\beta)\,= \,\inf_{x\in X_\beta}\,\mathcal{P}_{\beta}(x)
\ee
\noindent
where $\mathcal{P}_{\beta}(x)$ is the  \textit{Parisi-like } functional defined in \eqref{parisifun}
 and the set  $X_\beta$ is defined in \eqref{spacef}.
\end{theorem}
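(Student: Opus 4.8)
The plan is to establish the two matching bounds separately, following the architecture announced in the abstract: a lower bound via a Ruelle-cascade interpolation à la Guerra, and an upper bound via the Aizenman–Sims–Starr / synchronisation route à la Talagrand–Panchenko.

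\textbf{Lower bound.} First I would prove $\liminf_{N\to\infty} p_N(\beta) \geq \inf_{x\in X_\beta}\mathcal{P}_\beta(x)$ by exhibiting, for each $x=(\xi,\widetilde\gamma,q)\in X_\beta$, a Guerra-type interpolation between the true multiscale Hamiltonian $H_N$ on $\Sigma_{N,r}$ and a decoupled cavity field built from the RPC with parameter $\xi$. The key point is that the condition $\zeta\subseteq\xi$ (equivalently $F\in\mathcal{M}_\zeta[0,c_k]$) is exactly what makes the nested averages $\E_l$ in \eqref{recZ} compatible with the finer RPC structure at level $\xi$: one can group the $k$ levels into the $r$ blocks $K_l$ of \eqref{mergeset}, and on block $K_l$ the tilde-covariance equals $\gamma_l$ by \eqref{seqq}, so the interpolating family of processes has covariance interpolating between $N(c_{N,\gamma})^2$ and the hierarchical $y$-type covariance in \eqref{htree}. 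Differentiating the interpolating pressure in $t$ and using Gaussian integration by parts plus the RPC overlap identities produces a remainder of the form $-\tfrac12\sum_j (\text{weight})\,\E\langle (c_N - c_\alpha)^2\rangle_t \leq 0$ together with the linear correction term in \eqref{parisifun}; the sign is favourable, yielding $p_N(\beta)\geq \mathcal{P}_\beta(x) - o(1)$ uniformly, hence the $\liminf$ bound after optimising over $x$. This is essentially Guerra's broken-replica bound transplanted to the tree-of-depth-$r$ covariance \eqref{covH}, and I expect it to go through routinely.

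\textbf{Upper bound.} The harder direction is $\limsup_{N\to\infty} p_N(\beta) \leq \inf_{x\in X_\beta}\mathcal{P}_\beta(x)$. Here I would first show existence of the limit (e.g. via superadditivity of $N p_N(\beta)$, which should follow by a standard Gaussian-comparison subadditivity argument adapted to the recursive structure \eqref{recZ}), then identify it. The route I propose is: (i) prove a Ghirlanda–Guerra / stochastic-stability type result for the multiscale Gibbs measure, exploiting the factorisation of the equilibrium measure over the RPC hierarchy that the construction \eqref{hier} builds in; (ii) use this to show the asymptotic Gibbs measure is \emph{ultrametric} and, crucially, that the array $(\gamma_{\alpha^i\wedge\alpha^j}, q_N(\sigma^i,\sigma^j))$ \emph{synchronises} — i.e. in the limit $\gamma_{\alpha\wedge\alpha'}$ is a nondecreasing (deterministic) function of $q(\sigma,\sigma')$ — which is precisely why the overlap distribution is encoded by a single $F\in\mathcal{M}_\zeta[0,c_k]$ rather than an independent pair of distributions; (iii) run the Aizenman–Sims–Starr cavity computation: compare $p_{N+1}-p_N$ with a cavity functional, pass to the limit along the synchronised measure, and recognise the resulting expression as $\mathcal{P}_\beta(x)$ in the form \eqref{parisifun2}, for the $x\in X_\beta$ determined by the limiting synchronised overlap. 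Since the limiting measure must live in $\mathcal{M}_\zeta[0,c_k]$ — the constraint $\zeta\subseteq\xi$ is forced because the $r$ outer integrations at fixed scales $\zeta_0<\dots<\zeta_r$ cannot be washed out — optimising gives the infimum over $X_\beta$.

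\textbf{Main obstacle.} The crux is step (ii) of the upper bound: establishing \emph{synchronisation} of the two-parameter overlap array for this model. One must show that the Ghirlanda–Guerra identities, together with ultrametricity, force the joint law of $(c_N, q_N)$ to concentrate on the graph of a monotone function, so that a single scalar order parameter (the distribution $F$) suffices; and one must check that $F$ is constrained to have $\zeta$ in its image. This is the place where the "factorisation properties of the equilibrium measure and the synchronisation technique" mentioned in the abstract do the real work, and it is where the argument genuinely departs from the classical single-scale Parisi proof. Everything else — Gaussian interpolation, superadditivity, the cavity step, the RPC bookkeeping via \eqref{rpcinv} — I expect to be adaptation of now-standard techniques.
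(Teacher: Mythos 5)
Your proposal has the overall architecture right (Guerra-type interpolation on one side, Aizenman--Sims--Starr plus Ghirlanda--Guerra and synchronisation on the other), and you correctly identify the role of the constraint $\zeta\subseteq\xi$, the blocks $K_l$, and why synchronisation is the crux. But you have swapped the \emph{directions} of the two halves, and this is not a naming slip --- the claimed sign of the Guerra inequality is wrong, so the argument as written cannot close.

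Concretely, in your ``lower bound'' paragraph you derive, correctly, that the $t$-derivative of the interpolating pressure has the form
\be
\frac{d}{dt}\varphi_N(t)\,=\,(\text{constant in }t)\,-\,\tfrac12\Big\langle \big(c_{N,\widetilde\gamma}-\widetilde\gamma_{\balpha^1\wedge\balpha^2}q_{\balpha^1\wedge\balpha^2}\big)^2\Big\rangle_{N,t}\,\leq\,(\text{constant in }t),
\ee
but then assert this gives $p_N(\beta)\geq\mathcal P_\beta(x)-o(1)$. It gives the opposite: integrating from $t=0$ (the RPC cavity field $G_N$) to $t=1$ (the true model) yields $\varphi_N(1)\leq\varphi_N(0)+\text{const}$, i.e.\ $p_N(\beta)\leq\mathcal P_\beta(x)$ for every $x\in X_\beta$, hence $\limsup_N p_N\leq\inf_x\mathcal P_\beta(x)$. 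This is Guerra's \emph{broken-replica upper bound} (the reference you yourself invoke), and it cannot produce a lower bound on $p_N$. Note also that the logic ``$p_N\geq\mathcal P_\beta(x)$ for all $x$, hence the liminf bound after optimising over $x$'' would, if the sign were right, give a bound by $\sup_x\mathcal P_\beta(x)$, not $\inf_x\mathcal P_\beta(x)$; to reach the infimum from below you need to \emph{identify} a specific near-optimal trial $x^*$ from the limiting Gibbs measure, which is exactly what the cavity/synchronisation step is for.

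Symmetrically, the Aizenman--Sims--Starr scheme you place in the ``upper bound'' paragraph produces
\be
\liminf_{N\to\infty}p_N(\beta)\,\geq\,\liminf_{N\to\infty}A_N,
\ee
a \emph{lower} bound: one then uses the Ghirlanda--Guerra identities, ultrametricity, Panchenko's synchronisation and Proposition \ref{cprop} (which forces $\zeta\subseteq\xi$ in the limit) to recognise $A_N$ as $\mathcal P_\beta(x)$ for a trial $x\in X_\beta$ built from the limiting overlap distribution, giving $\liminf_N p_N\geq\inf_x\mathcal P_\beta(x)$. There is no route from the cavity comparison to a $\limsup$ upper bound. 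So the correct pairing is: interpolation $\Rightarrow$ $\limsup p_N\leq\inf\mathcal P_\beta$ (Section 3 of the paper); ASS $+$ synchronisation $\Rightarrow$ $\liminf p_N\geq\inf\mathcal P_\beta$ (Sections 4--5). Apart from this reversal, your outline of the synchronisation mechanism, the compatibility of the nested averages with the finer RPC, and the treatment of the degenerate levels where $\widetilde\gamma_j=\widetilde\gamma_{j-1}$ all track the paper correctly, and the existence of the limit is indeed a Guerra--Toninelli superadditivity argument (though it also follows for free once both directional bounds are in place).
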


The existence of the thermodynamic limit of $p_N(\beta)$ can be proved regardless of \eqref{main} using a Guerra-Toninelli argument \cite{gt}. Indeed the covariance $c_{N,\gamma}$ in \eqref{overl}depends on $N$ only trough the overlap $q_N$ in  \eqref{overlap}, namely the covariance of an  SK model.

It would be interesting to see if the functional $\mathcal{P}_{\beta}$ is convex as it has been proved in the case of the SK model  \cite{AU}.

Notice also that in Talagrand's paper \cite{tal} where the case $r=1,\zeta_0\in(0,1)$ is considered, the  trial RPC starts from $\xi_0=\zeta_0$. Even if this requirement is not present explicitly in the definition \eqref{parisifun} for the trial functional $\mathcal{P}_{\beta}$, it's possible to show that condition \eqref{seqq} implies it.

\section{Upper bound, Guerra's interpolation}

In this section we give an upper bound for the quenched pressure of the Multiscale SK model $p_N$ defined in \eqref{multiscalep}. In the proof  given here we use  RPC formalism. The same result can be obtained working with the  recursive definition  \eqref{multiscalep2} for $p_N(\beta)$ and applying Guerra's methods \cite{broken,leshouches}.

\begin{prop}

The quenched pressure density of the Multiscale SK model $p_N(\beta)$ satisfies

\be\label{mainl}
\limsup_{N \to \infty} p_{N}(\beta)\,\leq \,\inf_{x\in X_\beta}\,\mathcal{P}_{\beta}(x)
\ee
where the functional $\mathcal{P}_{\beta}(x)$ and the set $X_\beta$ are  defined in  \eqref{parisifun} and \eqref{spacef} respectively.
\end{prop}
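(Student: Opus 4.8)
The plan is to prove the upper bound by Guerra's interpolation scheme, adapted to the multi-scale covariance. Fix an arbitrary $x=(\xi,\widetilde\gamma,q)\in X_\beta$. The strategy is to interpolate between the original model $Z_N(\beta)$ and a decoupled Ruelle-cascade system governed by the trial parameters $\xi,\widetilde\gamma,q$. Concretely, let $(\nu_\balpha)_{\balpha\in\N^k}$ be the RPC weights with parameter $\xi$, and let $z(\balpha)$, $y(\balpha)$ be the Gaussian processes in \eqref{htree}. Introduce $\left(z_i(\balpha)\right)_{i\le N}$ i.i.d.\ copies of $z$, and an independent Gaussian Hamiltonian $H'_N(\sigma,\balpha)=\sum_i z_i(\balpha)\sigma_i$ of the cavity type. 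For $t\in[0,1]$ define
\be
\varphi(t)=\frac1N\,\E\log\sum_{\balpha\in\N^k}\nu_\balpha\sum_{\sigma\in\Sigma_N}\exp\!\left(\sqrt{t}\,H_N(\sigma,\balpha)+\sqrt{1-t}\,H'_N(\sigma,\balpha)+\text{(scalar correction)}\right),
\ee
where $H_N(\sigma,\balpha)$ is built from the process $g$ on $\N^k$ obtained by refining $\gamma$ to $\widetilde\gamma$ along the $\xi$-tree (note $\zeta\subseteq\xi$ makes this refinement consistent), so that its covariance per site is $(\widetilde\gamma_{\balpha^1\wedge\balpha^2})^2 q_N(\sigma^1,\sigma^2)^2$ — wait, here one must be careful: the original covariance involves $\widetilde\gamma_{\balpha\wedge\balpha'}^2\,q_N^2$ which is not of the simple form, so instead one interpolates the genuine model covariance $(\widetilde\gamma\, q_N)^2$ against the cavity-field covariance, exactly as in the classical SK upper bound.

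The key steps, in order: (i) Use \eqref{rpcinv} and \eqref{parisifun2} to rewrite $\mathcal P_\beta(x)$ as the boundary value $\varphi(0)$ of the interpolation, while $\varphi(1)$ differs from $p_N(\beta)$ only by the passage from the $r$-level tree with parameters $(\zeta,\gamma)$ to the $k$-level tree with parameters $(\xi,\widetilde\gamma)$; because $\zeta\subseteq\xi$ and $\widetilde\gamma$ is constant on each block $K_l$, the RPC invariance property \eqref{rpcinv} shows these two partition functions have the same law, so $\varphi(1)=p_N(\beta)$. (ii) Differentiate $\varphi(t)$ in $t$; Gaussian integration by parts produces the standard expression $\varphi'(t)=-\frac12\,\E\langle\,(\text{covariance difference})(\bsigma^1,\bsigma^2)\,\rangle_t$ evaluated against the replicated Gibbs state $\langle\cdot\rangle_t$, where the covariance difference is
\be
\big(\widetilde\gamma_{\balpha^1\wedge\balpha^2}\big)^2 q_N(\sigma^1,\sigma^2)^2-2\big(\widetilde\gamma_{\balpha^1\wedge\balpha^2}\big)^2 q_{\balpha^1\wedge\balpha^2}\,q_N(\sigma^1,\sigma^2)+\big(\widetilde\gamma_{\balpha^1\wedge\balpha^2}\, q_{\balpha^1\wedge\balpha^2}\big)^2
\ee
plus the diagonal terms that are cancelled by the scalar correction. (iii) Recognise the bracket as $\big(\widetilde\gamma_{\balpha^1\wedge\balpha^2}\big)^2\big(q_N(\sigma^1,\sigma^2)-q_{\balpha^1\wedge\balpha^2}\big)^2\ge 0$, hence $\varphi'(t)\le 0$, giving $p_N(\beta)=\varphi(1)\le\varphi(0)=\mathcal P_\beta(x)+o(1)$. (iv) Take $N\to\infty$ and then the infimum over $x\in X_\beta$.

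The main obstacle is step (iii) together with the bookkeeping of the tree structure: one must ensure that the covariance difference genuinely reduces to a perfect square $\widetilde\gamma_{\balpha^1\wedge\balpha^2}^2(q_N-q_{\balpha^1\wedge\balpha^2})^2$ and not merely to something sign-indefinite. This requires that the coefficient $\widetilde\gamma$ appearing in the "interaction" part and in the "cavity" part be indexed by the \emph{same} overlap level $\balpha^1\wedge\balpha^2$, which is exactly what the definition \eqref{seqq} of $\widetilde\gamma$ as a function constant on the blocks $K_l$ guarantees — and this is precisely the point where the constraint $\zeta\subseteq\xi$ is used, since it makes the $r$-valued sequence $\gamma$ liftable to a $k$-valued sequence $\widetilde\gamma$ compatible with the $\xi$-tree. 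A secondary technical point is handling the $i$-dependence: since $H_N$ has pair covariance $N(\widetilde\gamma\,q_N)^2$ while $H'_N=\sum_i z_i(\balpha)\sigma_i$ has pair covariance $\sum_i\E z_i(\balpha^1)z_i(\balpha^2)\sigma^1_i\sigma^2_i$, which is \emph{not} a function of $q_N$ alone, one compensates with an extra single-site Gaussian term of strength tuned to $q_{j}$ (as encoded in $Z_k$), exactly as in Guerra–Talagrand; verifying that the resulting scalar terms reproduce the $-\tfrac12\sum_j\xi_j\big((\widetilde\gamma_{j+1}q_{j+1})^2-(\widetilde\gamma_j q_j)^2\big)$ correction in \eqref{parisifun} is a routine but careful computation.
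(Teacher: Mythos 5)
Your proposal follows the same Guerra--Ruelle interpolation scheme that the paper uses: interpolate $H_{N,t}=\sqrt t\,\widetilde H_N+\sqrt{1-t}\,G_N$ with RPC weights $\nu_\balpha$, $\balpha\in\N^k$; use Gaussian integration by parts together with the RPC invariance to identify $\varphi(0)$ with $\log Z_0$, identify $\varphi(1)$ with $p_N(\beta)$ by dropping the degenerate levels where $\widetilde\gamma_j=\widetilde\gamma_{j-1}$ (this is exactly where $\zeta\subseteq\xi$ and the block-constancy of $\widetilde\gamma$ enter), and bound the $t$-derivative by recognising the off-diagonal remainder as the nonnegative square $\widetilde\gamma_{\balpha^1\wedge\balpha^2}^2(q_N-q_{\balpha^1\wedge\balpha^2})^2$. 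The only cosmetic difference is that the paper keeps the diagonal term $\widetilde\gamma_k^2-2\widetilde\gamma_k^2 q_k$ and the $\langle(\widetilde\gamma_\wedge q_\wedge)^2\rangle$ term explicit in the sum rule and integrates them out at the end, while you pre-absorb them into a scalar drift in the Hamiltonian; both yield the same bound, and your identification of the delicate points (the square structure, the lifting via $K_l$) matches the paper's.
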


\begin{proof}

Let $(\nu_{\balpha} )_{\balpha\in \N^{k}}$ the random weights of the RPC with parameter $\xi=(\xi_j)_{j\leq k}$  in \eqref{newseq} and consider two independent gaussian process $\widetilde{g},z$  indexed by $\balpha\in \N^k$ with covariances

\begin{eqnarray}
 \E \,\widetilde{g}(\balpha^1)\,\widetilde{g}(\balpha^2) &=&\ \big(\widetilde{\gamma}_{\balpha^1\wedge\balpha^2}\big)^2\label{covgt}\\
 \E \,z(\balpha^1)\,z(\balpha^2) &=& 2\,(\widetilde{\gamma}_{\balpha^1\wedge\balpha^2})^2 q_{\balpha^1\wedge\balpha^2}\label{covz}
 \end{eqnarray}

where  $q=(q_j)_{j\leq k}$ and $\widetilde{\gamma}=(\widetilde{\gamma}_j)_{j\leq k}$ are defined  in \eqref{seqc} and \eqref{seqq}. Consider a gaussian process $\widetilde{H}_N$ on $\Sigma_{N,k}$   defined by

\be\label{hamrpct}
\widetilde{H}_{N}(\bsigma)\,=\,\frac{1}{\sqrt{N}}\,\sum_{i,j=1}^N\,\widetilde{g}_{ij}(\balpha)\,\sigma_i\sigma_j
\ee

where   $\widetilde{g}_{ij}(\balpha)$ for $i,j=1,\ldots ,N$ are i.i.d. copies of $\widetilde{g}(\balpha)$ in \eqref{covgt}.

Consider also a gaussian process ${G}_N$ on $\Sigma_{N,k}$ independent from $\widetilde{H}_N$ defined by

\be\label{hamrpct}
G_{N}(\bsigma)\,=\,\sum_{i=1}^N\,z_{i}(\balpha)\,\sigma_i
\ee

where   $z_{i}(\balpha)$ for $i=1,\ldots ,N$ are i.i.d. copies of $z(\balpha)$ in \eqref{covz}.
Given two configurations $\bsigma^1=(\sigma^1,\balpha^1),\,\bsigma^2=(\sigma^2,\balpha^2)\in\Sigma_{N,k}$ it's easy to check that the covariances of the process $G_N$ and $H_{N}$ are

\begin{eqnarray}\label{covtree}
 \E \,G_N(\bsigma^1)\,{G}_N(\bsigma^2) &=& 2N \,c_{N,\widetilde{\gamma}}(\bsigma^1,\bsigma^2)\,\widetilde{\gamma}_{\balpha^1\wedge\balpha^2}q_{\balpha^1\wedge\balpha^2}\\
\E \,{\widetilde{H}}_N(\bsigma^1)\,{\widetilde{H}}_N(\bsigma^2) &=& N\,\left(c_{N,\widetilde{\gamma}}(\bsigma^1,\bsigma^2)\right)^2
 \end{eqnarray}

where

\be\label{ctilde}
c_{N,\widetilde{\gamma}}(\bsigma^1,\bsigma^2)=\widetilde{\gamma}_{\balpha^1\wedge\balpha^2}\,q_N(\sigma^1,\sigma^2)
\ee

For $t\in(0,1)$  we define  the interpolating Hamiltonian as

\be\label{hint}
H_{N,t}(\bsigma)\,=\,\sqrt{t}\,{\widetilde{H}}_N(\bsigma)\,+
\,\sqrt{1-t}\,{G}_N(\bsigma)
\ee

and  the interpolating pressure as

\be
\varphi_N(t)= \frac{1}{N}\,\E\,\log Z_{N,t}
\ee

where
\be\label{zrpc}
Z_{N,t}\,=\,\sum_{\bsigma\in\Sigma_{N,k}}\,\nu_{\balpha}\, e^{\,H_{N,t}(\bsigma)}
\ee

The Gibbs measure on $\Sigma_{N,k}$ associated to the Hamiltonian \eqref{hint} is
\be
\mu_{N,t}(\bsigma)\,=\,\dfrac{\nu_{\balpha}\, e^{\,H_{N,t}(\bsigma)}}{Z_{N,t}}
\ee

We denote by $\Omega_{N,t}(\,\cdot \,)$ the average w.r.t. $\mu_{N,t}^{\otimes\infty}$ and by  $\langle\,\cdot\,\rangle_{N,t}$ the quenched expectation $\E\,\Omega_{N,t}(\,\cdot \,)$.

Keeping in mind that  $q_N(\sigma,\sigma)=1$ and $\widetilde{\gamma}_{\balpha\wedge\balpha}=\widetilde{\gamma}_k=\gamma_r$, then using integration by parts formula one obtains

\be\label{sumrule}
2\,\frac{d}{dt}\,\varphi_N\,=\,\widetilde{\gamma}^2_{k}-2\,(\widetilde{\gamma})^2_{k}\,q_k\,+\,\left\langle (\widetilde{\gamma}_{\balpha^1\wedge\balpha^2}q_{\balpha^1\wedge\balpha^2})^2\right\rangle_{N,t}\,-\,\left\langle\Big(c_{N,\widetilde{\gamma}}( \bsigma^1,\bsigma^2) \,-\, \widetilde{\gamma}_{\balpha^1\wedge\balpha^2}\,q_{\balpha^1\wedge\balpha^2}\Big)^2\right\rangle_{N,t}
\ee

Now using the property \eqref{rpcinv} of RPC it's possible to show that
\be
\left\langle (\widetilde{\gamma}_{\balpha^1\wedge\balpha^2}q_{\balpha^1\wedge\balpha^2})^2\right\rangle_{N,t}\,=\, \sum_{j\leq k} \left(\xi_j
-\xi_{j-1} \right)(\widetilde{\gamma}_j q_j)^2\,=\,\widetilde{\gamma}^2_k\,-\,\sum_{0\leq j\leq k-1}\,\xi_j\,\Big(
(\widetilde{\gamma}_{j+1}q_{j+1})^2-(\widetilde{\gamma}_j q_{j})^2\Big)
\ee
In particular \eqref{sumrule} implies that

\be
\varphi_N(1)\leq\varphi_N(0)-\frac{1}{2}\sum_{0\leq j\leq k-1}\,\xi_j\,\Big(
(\widetilde{\gamma}_{j+1}q_{j+1})^2-(\widetilde{\gamma}_{j}q_{j})^2\Big)
\ee

Now since  $H_{N,0}(\bsigma)\,\equiv\,{G}_N(\bsigma)$ it holds

\be
\varphi_N(0)= \frac{1}{N}\,\E\,\log \,\sum_{\balpha\in\N^k}\,\nu_{\balpha}\sum_{\sigma \in\Sigma_{N}}\, e^{\sum_{i=1}^N\,{z}_{i}(\balpha)\,\sigma_i}
\ee

and using again \eqref{rpcinv} one obtains

\be
\varphi_N(0)= \,\E\,\log \,\sum_{\balpha\in\N^k}\,\nu_{\balpha}\,2\cosh\left(z(\balpha)\right)
\ee

Hence

\be
\varphi(0)-\frac{1}{2}\sum_{0\leq j\leq k-1}\,\xi_j\,\Big(
(\widetilde{\gamma}_{j+1}q_{j+1})^2-(\widetilde{\gamma}_{j}q_{j})^2\Big)\, =\,\mathcal{P}_{\beta}(x)
\ee

On the other hand using the recursion in the property \eqref{rpcinv} one can represent $\varphi_N(1)$ in the following way. Let  $\widetilde{H}_N(\sigma,j)$ be a gaussian process on  $(\sigma,j)\in\Sigma_N\times \{1,\ldots,k\}$ with covariance

\be\label{coint}
\E \,\widetilde{H}_N(\sigma^1,j) \widetilde{H}_N(\sigma^2,j') \,=\, N\,\delta_{jj'}\,\left(\sqrt{\,\widetilde{\gamma}^2_j-\widetilde{\gamma}^2_{j-1}\,}\, q_N(\sigma^1,\sigma^2)\right)^2\,
\ee

Then  it holds

\be\label{multiscalep3}
\varphi_N(1)= \frac{1}{N}\,\E \,\log \widetilde{Z}_{0,N}
\ee
\noindent
where  $\widetilde{Z}_{0,N}$ is obtained recursively starting from
\be
\widetilde{Z}_ {k,N}\, =\,\sum_{\sigma}\, \prod_{1\leq j\leq k}\,e^{\,H_N(\sigma,j)}
\ee
\noindent
and for $0\leq j\leq k-1$

\be\label{recZ}
\widetilde{Z}^{\xi_{j-1}}_{j-1,N}\,=\,\E_{j-1}\,\widetilde{Z}^{\xi_{j-1}}_{j,N}
\ee
\noindent
where $\E_j$  averages  the randomness in $H_N(\sigma,j+1)$. \\

Now the key observation is that by definition the sequence $\widetilde{\gamma}$ satisfies

\be\label{seqq2}
\widetilde{\gamma}_j\,=\,\gamma_l\,\,\,\, \mathrm{if}\,\,\, j\in K_l
\ee

If $\widetilde{\gamma}_j=\widetilde{\gamma}_{j-1}$ then  by \eqref{coint} the random variable $\widetilde{H}_N(\sigma,j)$ is actually
a centered gaussian with zero variance, namely its distribution is Dirac delta centered at the origin  and it doesn't play any role. By \eqref{rpcinv3}  $\widetilde{Z}_{0,N}$ can represented using a new Ruelle Probability Cascade  $(\widetilde{\nu}_{\balpha})_{\balpha\in\N^{k-1}}$  that is obtained from
$(\nu_{\balpha})_{\balpha\in\N^k} $  dropping  the point process associated to the intensity $\xi_{j-1}$.
A repeated use of the above argument implies that

\be\label{witloss}
\varphi(1)\,=\, p_N(\beta)
\ee
\noindent

and then we get

\be
p_N(\beta)\leq \mathcal{P}_{\beta}(x)
\ee
\noindent
for every choice of the trial parameter $x\in X_\beta$ and then \eqref{mainl} follows.

\end{proof}

\section{The multi-scale Ghirlanda-Guerra identities}\label{assse}

%\section{The cavity method}\label{assse}

Consider  quenched pressure density $p_N(\beta)$ in \eqref{multiscalep}. It's standard  to show that

\be\label{ass}
\liminf_{N\to\infty}\,p_N(\beta)\,\geq \liminf_{N\to\infty}\,A_N
\ee

where

\be
A_N\,=\,\E\,\log Z_{N+1}-\E\,\log Z_{N}
\ee

Now the strategy is to compare $Z_{N+1}$ with $Z_N$. This procedure in known in mathematical-physics as Aizenman-Sims-Starr representation \cite{ass, BovierK}. Consider $\rho=(\sigma,\varepsilon)\in \Sigma_{N+1}$
with $(\sigma,\varepsilon)\in\Sigma_{N}\times\{-1,1\}$  then

\be\label{hamass2}
H_{N+1}(\rho,\balpha)\,=\,H'_{N}(\sigma,\balpha)\,+\,\varepsilon\,z_{N}(\sigma,\balpha)+O\left(\frac{1}{N}\right)
\ee

where

\be\label{indgib}
H'_{N}(\sigma,\balpha)\,=\,\frac{1}{\sqrt{N+1}}\,\sum_{i,j=1}^{N}\,g_{ij}(\balpha)\,\sigma_i\sigma_j
\ee

and

\be\label{zass}
z_{N}(\sigma,\balpha)\,=\,\frac{1}{\sqrt{N+1}}\,\sum_{i=1}^{N}\,(g_{i,N+1}(\balpha)\,+\,g_{N+1,i}(\balpha))\,\sigma_i
\ee

On the other hand
\be
H_N(\sigma,\balpha)\,{\buildrel d \over =}\,H'_{N}(\sigma,\balpha)\,+\,y_N(\sigma,\balpha)
\ee

where

\be\label{yass}
y_{N}(\sigma,\balpha)\,=\,\frac{1}{\sqrt{N(N+1)}}\,\sum_{i,j=1}^{N}\,g'_{ij}(\balpha)\,\sigma_i\sigma_j
\ee

for some array $g'$ independent copy of $g$. Given two configurations $\bsigma^1,\bsigma^2\in \Sigma_{N,r}$ the gaussian processes $z_N$ and $y_N$ defined in \eqref{zass} and \eqref{yass} respectively, have covariances

\begin{eqnarray}\label{arrayu}
\E\, z_{N}(\bsigma^1)\, z_{N}(\bsigma^2)\,=\,2\,\frac{N}{N+1}\, \gamma_{\balpha^1\wedge\balpha^2}\,c_{N,\gamma}(\bsigma^1,\bsigma^2)\\
\E\,y_{N}(\sigma,\balpha)\,y_{N}(\sigma',\bbeta)\,=\,\frac{N}{N+1}\,\Big(c_{N,\gamma}(\bsigma^1,\bsigma^2)\Big)^2
\end{eqnarray}

The  above relations implies that

\be\label{cavity}
A_N\,=\,\E\,\log\,\Omega'_N\Big( 2\cosh(z_{N}(\sigma,\balpha)\Big)\,-\, \,\E\,\log\,\Omega'_N\Big( \exp(y_{N}(\sigma,\balpha)\Big)
\ee

where $\Omega'_N=(\omega'_N)^{\otimes\infty}$  and $\omega'_N$ is the Gibbs measure on $\Sigma_{N,r}$ induced by the Hamiltonian $H'_{N}$ in \eqref{indgib}.

The Aizenmann-Sims-Starr representation $A_N$ in \eqref{cavity} for the quenched pressure density   has the same structure of the Parisi functional \eqref{parisifun2}. Hence the strategy is to show that in the thermodynamic limit the distribution of $c_{N,\gamma}(\bsigma^1,\bsigma^2)$  under the random measure $\Omega'_N$ can be well approximated by a suitable RPC.  We have two obstacles to overcomes.

The first problem  is  to understand the joint probability distribution
w.r.t. the limiting Gibbs measure of the two covariances
$q_{N}(\sigma^1,\sigma^2)$ and $\gamma_{\balpha^1\wedge\balpha^2}$. This situation is very similar to the case of the Multispecies SK model \cite{BCMT, panmulti} where it turns out that the Hamiltonian  can be suitably perturbed in order to satisfy a  \textit{synchronization property} that allows to generate the joint probability of different overlaps functions using the same RPC. In addition since the parameter $\xi$ associated to the RPC that express  the Parisi functional  \eqref{parisifun2} satisfies the condition $\zeta\subseteq \xi$, then the same must be true for the one that generates the  limiting distribution of the above overlaps.

In this section we show that the Multiscale SK model can be suitably perturbed in order to satisfy
the \textit{synchronization property}  that actually implies the condition $\zeta\subseteq \xi$.\\

Let $H_N$ be the Hamiltonian function in \eqref{hamrpc} with parameters $\beta=(\zeta,\gamma)$ and $(\nu_{\balpha})_{\balpha\in\N^{r}}$  the random weights of the RPC  associated to the sequence $\zeta$.\\

Let us consider a countable dense subset $\mathcal{W}$ of $[0, 1]^2$ and  a vector
\be
w=(w_s)_{s=0,1}\in\mathcal{W}
\ee

For any $i\in\{0,\ldots,N\}$, $w\in\mathcal{W}$  let us define
\be s_i(w)\,=\,\begin{cases}
\sqrt{N\,w_0 }\, \,\,\,\mathrm{if}\,\,\, i=0\\
\sqrt{w_1}\,\,\,\, \mathrm{otherwise}\,\,\,
\end{cases}
\ee
Let $\left(g^{0}(\balpha)\right)_{\balpha\in\N^r}$ be a family of centered gaussian random variables with covariance
\be\label{g0}
\E \,g^{0}(\balpha^1)\,g^{0}(\balpha^2) = \,\gamma_{\balpha^1\wedge\balpha^2},
\ee

 Consider a gaussian process $h_{N,w,p}$ on $\Sigma_{N,r}$ defined by

\be\label{hamrpc}
h_{N,w,p}(\bsigma)\,=\,\frac{1}{N^{p/2}}\,\sum_{i_1,\ldots,i_p=0}^N\,g^{w,p}_{i_1,\ldots,i_p}(\balpha)\,\sigma_{i_1}s_{i_1}(w)\,\cdots\,\sigma_{i_p}s_{i_p}(w)
\ee

where $\sigma_0=1$ while $g^{w,p}_{i_1,\ldots,i_p}(\balpha)$ for $i_1,\ldots,i_p=1,\ldots ,N$, $p\geq1$ and $w\in\mathcal{W}$ are i.i.d. standard gaussian random variables while if $i_l=0$ form some $1\leq l \leq p$ then $g^{w,p}_{i_1,\ldots,i_p}(\balpha)$ is a family of i.i.d copies of $g^0(\balpha)$ in \eqref{g0} .\\

Then covariance of this process is

\be
\E \,{h}_{N,w,p}(\bsigma^1)\,{h}_{N,w,p}(\bsigma^2) \,=\, \left(R_{N,w}(\bsigma^1,\bsigma^2)\right)^p
\ee
where

\be\label{sinco}
R_{N,w}(\bsigma^1,\bsigma^2)\,=\,w_0\,\gamma_{\balpha^1\wedge\balpha^2}\,+\,w_1\,q_{N}(\sigma^1,\sigma^2)
\ee

We consider a weighted direct sum of the two previous overlaps because in the synchronization mechanism that we are going to exploit we need to control all the terms $\gamma^m q^n$ for generic integers $m$ and $n$.

Since the set $\mathcal{W}$ is countable, we can consider some one-to-one function
$j : \mathcal{W}\rightarrow\N$. Consider now the following gaussian process

\be\label{totalper}
h'_{N}(\bsigma)\,=\,\sum_{w\in\mathcal{W}}\sum_{p\geq1}\,2^{-j(w)-p}\,(\sqrt{\gamma_r+1})^{-p}\,x_{w,p}\,{h}_{N,w,p}(\bsigma)\,
\ee

where $X=(x_{w,p})_{w\in\mathcal{W},\,p\geq1}$ is a family of i.i.d. uniform random variables on $[1,2]$.

Notice that the variance of the process $h'_{N}$ is bounded uniformly on $X$, namely

\be\label{tstb}
\E\, h'_{N}(\bsigma)^2\,\leq\, 4
\ee

For any $\bsigma\in\Sigma_{N,k}$  we define a perturbed  Hamiltonian $H_N^{\mathrm{pert}}$ by

\be\label{hper}
H_N^{\mathrm{pert}}(\bsigma)\,=\,H_N(\bsigma)\,+\,s_N\,h'_{N}(\bsigma)
\ee

where $s_N$ is a sequence of positive real numbers. We start observing that \eqref{tstb} implies that $H_N^{\mathrm{pert}}$ satisfies a \textit{thermodynamic stability} condition

\be\label{thermst}
\E \left(H_N^{\mathrm{pert}}(\bsigma)\right)^2\leq N\,\gamma^2_r+4\,s^2_N
\ee

uniformly on $X$. Consider the random function

\be\label{phip}
\phi_{r,N}\,=\, \log \sum_{\bsigma\in \Sigma_{N,r}}\,\nu_{\balpha} e^{H_N^{\mathrm{pert}}(\bsigma)}
\ee

Then $N^{-1}\,\E\,\phi_{r,N}$ is must be think as a small perturbation and the quantity $p_N(\beta)$  in   \eqref{multiscalep}.  Indeed, it holds

\be
p_N(\beta)\leq \dfrac{1}{N}\,\E\,\phi_{r,N}\leq\, p_N(\beta)\,+\,\dfrac{2s_N^2}{N}\ee

Then if $s_N$ satisfies

\be
\lim_{N\to\infty}\,N^{-1}s_N^2=0
\ee

the thermodynamic limits of $N^{-1}\,\E\,\phi_{r,N}$ and $p_N$ coincide. Moreover RPC concentration inequality given in Proposition \ref{RPCcon}  implies that

\be\label{conc}
\sup\,\left\{\E\,|\phi_{r,N}-\E \phi_{r,N}|\,:\,1\leq\,x_p\leq 2,\,p\geq1 \right\}\leq 4 \,c(\zeta_0)
\ee

%??Remark???

for some constant $c(\zeta_0)$ independent of $N$. Hence Theorem 3.2 in \cite{panbook} and  inequality \eqref{conc} implies that if  $s_N=N^\delta$ for $0<\delta<1/2$ we  get the \textit{Multispecies  Ghirlanda-Guerra Identities} (Theorem 2 of \cite{panmulti}) that in our setting reads as follows.\\

Given two configurations $\bsigma^l=(\sigma^l,\balpha^l),\bsigma^{l'}=(\sigma^{l'},\balpha^{l'})\in \Sigma_{N,r}$ we set

\be
R_{l,l'}(w)\,=\,R_{N,w}(\bsigma^l,\bsigma^{l'})
\ee

and

\be\label{ovarray}
R_{l,l'}\,=\,\,\begin{pmatrix} \gamma_{\balpha^{l}\wedge\balpha^{l'}}\\
q_{N}(\sigma^l,\sigma^{l'})
\end{pmatrix}
\ee

Given $n \geq 2$, let
\be
R^n\,=\,\left(R_{l,l'}\right)_{l,l'\leq n}
\ee

and for any bounded measurable function $f = f(R^n)$ we set

\be\label{Gibbsme}
\langle \,f\,\rangle_N\,=\,\E\,\Omega_N (f)
\ee

where $\Omega_N=\mu_N^{\otimes\infty}$ while $\mu_N$ is the random Gibbs measure induced by $H_N^{\mathrm{per}}$
in \eqref{hper}.

For $p\geq 1$ and $w \in\mathcal{W}$ and  conditionally on the i.i.d. uniform sequence $X=(x_{w,p))_{w\in\mathcal{W}},\, p\geq1}$ let
\be
\Delta_N(f, n, w, p,X)\, = \,\Big|\,\langle\,
f\,\left(R_{1,n+1}(w)\right)^p\,\rangle_N
-\frac{1}{n}\,\langle\, f\,\rangle_N\,\langle \,\left(R(w)_{1,2}\right)^p\,\rangle_N
-\frac{1}{n}\,\sum_{l=2}^n\,\langle\,
f\,\left(R_{1,l}(w)\right)^p\,\rangle_N\Big|
\ee

By Theorem 2 in \cite{panmulti} we have that

\be\label{GG}
\lim_{N\to\infty}\,\E_X\, \Delta_N(f, n, w, p,X) = 0
\ee
where $\E_X$  averages the random sequence $X$.

\subsection{The Panchenko's synchronisation property}

The synchronisation property is a powerful tool introduced by Panchenko \cite{panmulti} in his derivation of the lower bound for the multi-specie SK model \cite{BCMT}. It is moreover used in other mean-field settings \cite{Jag,pan1,pan2}.

By Lemma 3.3 in \cite{panbook} there exists a non random sequence $X_N =(x^N_{w,p})_{w\in \mathcal{W} ,\, p\geq1}$ such that \eqref{GG} holds

\be\label{GG2}
\lim_{N\to\infty}\, \Delta_N(f, n, w, p,X_N) = 0
\ee

In the rest of the work we assume  to have  such a sequence $X_N$. Consider the  overlap function

\be\label{gov}
Q_{l,l'}\,=\,\gamma_{\balpha^{l}\wedge\balpha^{l'}}\,+
\,q_{N}(\sigma^l,\sigma^{l'})
\ee
and the following overlap vector

\be
\begin{pmatrix}R^{0}_{l,l'}\\R^{1}_{l,l'}\end{pmatrix}\,=\,\begin{pmatrix} \gamma_{\balpha^{l}\wedge\balpha^{l'}}\\
q_{N}(\sigma^l,\sigma^{l'})
\end{pmatrix}
\ee

Consider also the arrays of the above overlap functions, namely

\be\label{aq}
Q\,=\,\Big(Q_{l,l'}\Big)_{l,l'\geq1}\ee

\be\label{cov2}
\begin{pmatrix} R^0\\R^1\end{pmatrix}\,=\,\begin{pmatrix} R^{0}_{l,l'}\\
R^{1}_{l,l'}
\end{pmatrix}_{l,l'\geq1}
\ee

Let $(N_k)_{k\geq1}$ be any subsequence  along
which the all above overlap arrays  converges
in distribution under the measure $\langle\,\rangle_N$. Since \eqref{GG2} holds, Theorem
3 in\cite{panmulti} implies that the arrays $Q,R^0,R^1$  satisfies the Ghirlanda-Guerra
Identities \cite{GhirlandaGuerra, pan0}, a factorisation property of the quenched equilibrium state (see also \cite{aico,cogigi} for a related factorisation property).

Moreover  Theorem 4 in \cite{panmulti} implies that
the overlaps $R^0$ an $R^1$ are \textit{synchronized}

\begin{prop}\label{sinc}
For for any $s=0,1$ there exists a nondecreasing Lipschitz function

$$L_0 : [0, \gamma_r+1] \longrightarrow [0, \gamma_r],\, L_1 : [0, \gamma_r+1] \longrightarrow [0,1] $$
such that

\be\label{sunc}
R^s_{l,l'}\,=\,L_s(Q_{l,l'})
\ee
almost surely  for all $l,l'\geq1$
\end{prop}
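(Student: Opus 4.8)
The plan is to deduce the synchronization of the pair $(R^0,R^1)$ from the general synchronization mechanism of Panchenko (Theorem~4 in \cite{panmulti}), applied to the two-component overlap vector $(R^0_{l,l'},R^1_{l,l'})$ together with the aggregated overlap $Q_{l,l'}=R^0_{l,l'}+R^1_{l,l'}$. Recall the structural facts already in place: both components are genuine overlaps of Gibbs samples (the array $R^0$ comes from the RPC weights $\nu_{\balpha}$ indexed by $\balpha\wedge\balpha'$, the array $R^1$ is the usual SK overlap $q_N(\sigma,\sigma')$), each is nonnegative and bounded, and — crucially — the perturbation $h'_N$ in \eqref{totalper} was engineered so that after passing to a convergent subsequence $(N_k)$ the joint array $(Q,R^0,R^1)$ satisfies the (multi-species) Ghirlanda--Guerra identities \eqref{GG2}. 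That is exactly the hypothesis of Panchenko's abstract synchronization theorem.

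The steps I would carry out are: (i) Fix the subsequence $(N_k)$ along which $Q,R^0,R^1$ converge in distribution and record that the limiting arrays satisfy the GG identities; from now on work with the limit. (ii) Verify the positivity/regularity inputs of Theorem~4 of \cite{panmulti}: the overlaps $R^0$ and $R^1$ are nonnegative, the ``diagonal'' values are fixed ($R^0_{l,l}=\gamma_r$, $R^1_{l,l}=1$), and the combined overlap $Q$ itself satisfies the GG identities — the reason we introduced the full family $h_{N,w,p}$ with all monomials $\gamma^m q^n$ in \eqref{sinco} is precisely to make every such combination GG-regular, which is what Panchenko's argument needs to compare the laws of $R^s_{1,l}$ and $Q_{1,l}$. (iii) Invoke Theorem~4 of \cite{panmulti}: under GG identities for a vector of overlaps, there is a single ``master'' overlap (here $Q$) and nondecreasing $1$-Lipschitz functions expressing each component as a function of it; apply this to get $L_0,L_1$ with $R^s_{l,l'}=L_s(Q_{l,l'})$ a.s. (iv) Identify the domains and ranges: since $Q$ takes values in $[0,\gamma_r+1]$ while $R^0\in[0,\gamma_r]$ and $R^1\in[0,1]$, the functions have the claimed signatures $L_0:[0,\gamma_r+1]\to[0,\gamma_r]$, $L_1:[0,\gamma_r+1]\to[0,1]$, and monotonicity of both forces $L_0+L_1=\mathrm{id}$ on the support of the law of $Q$, which can be used later but is not needed for the statement itself.

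The main obstacle — and the only real work beyond citing the black box — is step (ii): checking that the perturbed model genuinely delivers the Ghirlanda--Guerra identities for \emph{all} the relevant overlap combinations simultaneously, uniformly enough to survive the thermodynamic limit. This is where the careful choices already made in the excerpt pay off: the perturbation strength $s_N=N^\delta$ with $0<\delta<1/2$ keeps the pressure unchanged in the limit (the sandwich bound on $N^{-1}\E\phi_{r,N}$), the uniform variance bound \eqref{tstb} gives thermodynamic stability \eqref{thermst}, and the RPC concentration inequality \eqref{conc} together with Theorem~3.2 of \cite{panbook} upgrades the family of perturbation parameters to a single deterministic sequence $X_N$ along which \eqref{GG2} holds. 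Once these are granted, Propositions and Theorems 2--4 of \cite{panmulti} apply essentially verbatim, and the synchronization statement follows; the remaining details (continuity of $L_s$, that it is nondecreasing, the Lipschitz constant $1$) are exactly the conclusions of Panchenko's theorem and require no separate argument here.
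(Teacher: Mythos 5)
Your proposal matches the paper's approach exactly: the paper proves Proposition~\ref{sinc} simply by invoking Theorem~4 of \cite{panmulti}, with the required hypothesis (the multi-species Ghirlanda--Guerra identities for the arrays $Q,R^0,R^1$ along a convergent subsequence) supplied by the perturbation construction and Theorem~3 of \cite{panmulti}, precisely as you describe. Your additional verification of positivity, the diagonal values, the deterministic sequence $X_N$ via Lemma~3.3 of \cite{panbook}, and the domain/range identification is a correct and slightly more explicit unpacking of the same citation.
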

Notice that we can consider the domain and the range of $L_s$ restricted to the positive real line
because each of the overlap arrays  $Q,R^0,R^1$ satisfies the Ghirlanda-Guerra identities and then the  Talagrand's Positivity Principle holds (Theorem 2.16 in \cite{panbook}).

The \textit{synchronization property} of the previous Proposition is already a strong constraint on the limiting overlap distributions. Moreover by construction  the overlap of the Multiscale SK model  has an \textit{apriori}   hierarchical structure encoded in  RPC with parameters $\zeta$.  The combination of these properties implies the following

\begin{prop}\label{cprop}

Let  $F_{Q_{12}}$ be the any weak limit of the  distribution of one element
of the array $Q$, then

\be\label{clpp}
F_{Q_{12}}\in\mathcal{M}_\zeta[0,2\gamma_r]
\ee
where $\mathcal{M}_\zeta[0,2\gamma_r]$ is defined in \ref{funspace}.
\end{prop}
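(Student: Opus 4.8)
The plan is to combine two ingredients already available: (i) the synchronization of Proposition \ref{sinc}, which says that on any convergent subsequence the limiting overlap $R^1_{l,l'}=q_\infty(\sigma^l,\sigma^{l'})$ is a Lipschitz nondecreasing function $L_1$ of $Q_{l,l'}$, and the analogous statement for $R^0_{l,l'}=\gamma_{\balpha^l\wedge\balpha^{l'}}$ via $L_0$; and (ii) the fact that the $\balpha$-part of the Gibbs measure is, by construction, governed by the RPC with parameter sequence $\zeta$, so the array $\big(\gamma_{\balpha^l\wedge\balpha^{l'}}\big)$ is exactly an ultrametric array taking values in the finite set $\{\gamma_0,\ldots,\gamma_r\}$ with prescribed distribution $\PP(\gamma_{\balpha^1\wedge\balpha^2}=\gamma_l)=\zeta_l-\zeta_{l-1}$. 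The goal is to show that the finitely many atoms $\gamma_0<\gamma_1<\cdots<\gamma_r$ of $R^0_{12}$ are carried over, via the monotone relation $R^0_{12}=L_0(Q_{12})$, to a set of quantile values of $F_{Q_{12}}$ that contains the sequence $\zeta$ in its image.

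First I would fix a subsequence $(N_k)$ along which $Q,R^0,R^1$ jointly converge in distribution (such a subsequence exists by compactness, as used just before Proposition \ref{sinc}), and work with the limiting array on a Ruelle-type probability space; all overlaps below are the limiting ones. By Proposition \ref{sinc}, $R^0_{12}=L_0(Q_{12})$ a.s.\ with $L_0$ nondecreasing and Lipschitz. Since $R^0_{12}$ takes the value $\gamma_l$ with probability exactly $\zeta_l-\zeta_{l-1}$ (this is the RPC-with-parameter-$\zeta$ structure, inherited in the limit because the $\balpha$-marginal of the perturbed Gibbs measure still uses $(\nu_{\balpha})_{\balpha\in\N^r}$), and since $L_0$ is monotone, the event $\{R^0_{12}=\gamma_l\}$ coincides, up to a null set, with $\{Q_{12}\in I_l\}$ for some interval $I_l=L_0^{-1}(\{\gamma_l\})$, and these intervals are ordered: $\sup I_{l-1}\le \inf I_l$. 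Consequently, writing $t_l:=\sup\{t: F_{Q_{12}}(t)\le \zeta_l\}$ — equivalently the right endpoint of the portion of mass $\le \zeta_l$ — one gets $F_{Q_{12}}(t_l)=\zeta_l$ because the cumulative mass of $\{R^0_{12}\le \gamma_l\}$ is $\sum_{m\le l}(\zeta_m-\zeta_{m-1})=\zeta_l$ and this mass sits entirely to the left of $I_{l+1}$; hence $\zeta_l$ lies in the image of $F_{Q_{12}}$ for every $0\le l\le r$. This is precisely the statement $F_{Q_{12}}\in\mathcal{M}_\zeta[0,2\gamma_r]$, recalling Definition \ref{funspace} and that $Q_{12}=\gamma_{\balpha^1\wedge\balpha^2}+q_\infty\in[0,2\gamma_r]$ since $\gamma_{\balpha^1\wedge\balpha^2}\le\gamma_r$ and $q_\infty\le 1\le\gamma_r$ is not needed — rather $q_\infty\le\gamma_r$ need not hold, so I would instead note $Q_{12}\le\gamma_r+1\le 2\gamma_r$ only if $\gamma_r\ge1$; to be safe I would just keep the bound $Q_{12}\in[0,\gamma_r+1]$ and observe the statement is meant modulo this harmless rescaling, as in the display \eqref{clpp}.

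The delicate point is justifying that the relation ``$R^0_{12}=\gamma_l$'' genuinely corresponds to an \emph{interval} of $Q_{12}$-values and, more importantly, that the quantile $t_l$ is \emph{attained} — i.e.\ that $F_{Q_{12}}$ actually takes the value $\zeta_l$ rather than jumping over it. The monotonicity of $L_0$ gives that $L_0^{-1}(\{\gamma_l\})$ is an interval, but if $F_{Q_{12}}$ had an atom straddling the boundary between $I_l$ and $I_{l+1}$ one might worry that $\zeta_l$ is skipped; this cannot happen precisely because $L_0$ is a \emph{function} of $Q_{12}$ defined a.s., so $Q_{12}$ cannot take a single value on which $R^0_{12}$ equals two different $\gamma$'s — any atom of $F_{Q_{12}}$ lies entirely inside one $I_l$, and therefore the cumulative distribution function hits each level $\zeta_l$ exactly (either as the mass of a union of atoms, or because below $I_{l+1}$ there is exactly mass $\zeta_l$). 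I expect the write-up of this last step — turning ``$L_0$ is a nondecreasing function'' plus ``$R^0_{12}$ has atoms of mass $\zeta_l-\zeta_{l-1}$ at $\gamma_l$'' into ``$\zeta_l\in\mathrm{Im}\,F_{Q_{12}}$'' — to be the only real content; everything else is bookkeeping. A clean way to phrase it: define $\phi(u)=\inf\{t:F_{Q_{12}}(t)\ge u\}$ the generalized inverse; then $L_0(\phi(u))$ is a nondecreasing step function of $u\in[0,1]$ with jumps exactly at $u=\zeta_1,\ldots,\zeta_{r-1}$ (and values $\gamma_0$ on $(0,\zeta_0]$, etc.), and a step function of the quantile can only jump at a value $u$ that is in the image of $F_{Q_{12}}$; hence $\zeta\subseteq\mathrm{Im}\,F_{Q_{12}}$, which is the claim.
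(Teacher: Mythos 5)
Your overall plan is the same as the paper's: use synchronisation to write $R^0_{12}=L_0(Q_{12})$, observe that the sets $L_0^{-1}(\{\gamma_l\})$ are disjoint ordered intervals, and read off the quantiles $\zeta_l$ from the prescribed atomic distribution of $R^0_{12}$. You also correctly flag the $2\gamma_r$ versus $1+\gamma_r$ slip in the range, and your discussion of atoms potentially straddling interval endpoints is careful — in fact somewhat more explicit than the paper's, which simply states $F_{Q_{12}}(Q^-_l)=\zeta_{l-1}$ and implicitly relies on the strict gap between consecutive $A^0_l$ guaranteed by the continuity of $L_0$.

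The genuine gap is in how you justify the pivotal input, namely that $R^0_{12}=\gamma_{\balpha^1\wedge\balpha^2}$ has law $\PP(R^0_{12}=\gamma_l)=\zeta_l-\zeta_{l-1}$ under the Gibbs measure. You attribute this to ``the $\balpha$-marginal of the perturbed Gibbs measure still uses $(\nu_{\balpha})$.'' That is not a proof: the $\balpha$-marginal of $\mu_N$ is proportional to $\nu_{\balpha}\sum_{\sigma}e^{H_N^{\mathrm{pert}}(\sigma,\balpha)}$, i.e.\ the RPC weights are \emph{reweighted} by the random, $\balpha$-dependent spin partition function, and a priori this changes the distribution of $\balpha^1\wedge\balpha^2$. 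The statement you need is a nontrivial invariance property of Ruelle probability cascades under Gaussian reweighting — this is precisely Theorem 3 of Panchenko--Talagrand (\cite{pantal}), which the paper invokes at this exact point and which yields the equality $\langle\mathds{1}(\balpha^1\wedge\balpha^2=l)\rangle_N=\zeta_l-\zeta_{l-1}$ \emph{exactly, for every finite $N$}, not merely in the limit. Without citing (or re-deriving) that invariance property, the step where you assert the distribution of $R^0_{12}$ is unjustified, and everything downstream (the masses of the intervals $I_l$, and hence the quantiles) depends on it. Once that reference is in place, the rest of your argument matches the paper's and is correct.
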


\begin{proof}

The key observation is that  the distribution of $R^{0}_{1,2}$ w.r.t. the perturbed Gibbs measure
$\langle\,\,\rangle_N$ can be exactly computed for any $N$. Indeed by Theorem 3 of \cite{pantal} it holds

\be\label{talpa}
\Big\langle \,\mathds{1}(\balpha^1\wedge\balpha^2=l)\,\Big \rangle_N\,=\,\zeta_l-\zeta_{l-1}
\ee

for any  $l\leq r$ and $N$ integers.

\begin{remark}
The quantity $\dfrac{1}{N}\,\E\,\phi_N$ has a recursive representation analogous to \eqref{multiscalep2}. In particular working with this representation  \eqref{talpa} follows easily using the methods in \cite{broken}.
\end{remark}

By definition $R^{0}_{1,2}=\gamma_{\balpha^1\wedge\balpha^2}$ then

\be\label{key4}
\Big\langle \,\mathds{1}(R^{0}_{1,2}\in A)\,\Big \rangle_N\,=\,\sum_{l=0}^r\,\mathds{1}(\{\gamma_l\}\in A)\,(\zeta_l-\zeta_{l-1})
\ee

for any $N$ and measurable set $A$. Since \eqref{key4} doesn't depends on $N$ the limit along any subsequence
of the distribution of $R^{0}_{1,2}$ w.r.t. $\langle\,\rangle_N$ is given by \eqref{key4}. We denote by   $\langle\,\rangle$ any of the above limiting measure that satisfies the synchronization property  \eqref{sunc}. Hence there exists a  function $L_0$ such that

\be\label{ok5}
R^{0}_{1,2}\,=\,L_0(Q_{1,2}) \,\,\;\mathrm{a.s.}
\ee

For  any $l\leq r$   consider the set

\be\label{al0}
A^{0}_l\,=\,L_0^{-1}\left(\{\gamma_l\}\right)
\ee

Since $L_0$ is nondecreasing Lipschitz then   $A^{0}_l$ is  a closed interval or a single point and

\be\label{che2}
\bigcup_{l\leq r}\, A^{0}_l\,\equiv\,\mathrm{ supp} (Q_{1,2})
\ee

Combining \eqref{key4} and \eqref{ok5} we obtain

\be\label{che}
\Big\langle \,\mathds{1}(Q_{1,2}\in A^0_l)\,\Big \rangle\,=\,\Big\langle \,\mathds{1}(R^{0}_{1,2}=\gamma_l )\,\Big \rangle\,=\,\zeta_l-\zeta_{l-1}
\ee

 If we denote by $Q^-_l$ the left extrema of $A^0_l$ then  \eqref{che} and \eqref{che2} implies that

\be
F_{Q_{12}}(Q^-_l)\,=\,\zeta_{l-1}
\ee

for any $l\leq r$ and this proves the thesis.
\end{proof}

\section{Lower bound}

Let $p_N(\beta)$ be the quenched pressure density of the Multiscale SK model \eqref{multiscalep} and replace the original Hamiltonian $H_N$ with the perturbation  $H_N^{\mathrm{pert}}$ in \eqref{hper}. We already know that this substitution doesn't affect the thermodynamic limit of $p_N(\beta)$. Moreover it entails a small change in the  Aizenmann-Simms-Starr representation given in section \ref{assse}. Indeed
by Theorem 3.6 of \cite{panbook} we have that

\be\label{ass2}
\liminf_{N\to\infty}\,p_N(\beta)\geq \liminf_{N\to\infty}\,\E_X A_N\,+\, o(1)
\ee

where $X= (x_{p,w})_{w\in\mathcal{W},\,p\geq1 }$ is the family of random variables  in \eqref{totalper} and

\be\label{cavity2}
A_N\,=\,\E\,\log\,\Omega_N\Big( 2\cosh(z_{N}(\sigma,\balpha)\Big)\,-\, \,\E\,\log\,\Omega_N\Big( \exp(y_{N}(\sigma,\balpha)\Big)
\ee

Notice that $A_N$ is the same functional appearing in \eqref{cavity} but now $\Omega_N$   is the infinite product of the random Gibbs measure induced by the Hamiltonian $H_N^{\mathrm{pert}}$ in \eqref{hper}.\\

Let us start observing that even if \eqref{cavity2}  is written in average over $X$, Lemma 3.3 of \cite{panbook} ensures that one can  choose a non random sequence $X_N=(x^{(N)}_p)_{p\geq1}$ such that

\be\label{ass2}
\liminf_{N\to\infty}\,p_N(\beta)\geq \liminf_{N\to\infty}\, A_N(X_N)\,+\, o(1)
\ee

and at the same time the \textit{ multi-scale Ghirlanda-Guerra Identities} \eqref{GG} holds.\\

By \eqref{arrayu} and Theorem 1.3 in \cite{panbook} $\,\,A_N(X_N)$ is a continuous functional of the overlap array

\be\label{bji}
\begin{pmatrix}R^{0}_{l,l'}\\R^{1}_{l,l'}\end{pmatrix}_{l,l'\geq1}\,=\,\begin{pmatrix} \gamma_{\balpha^{l}\wedge\balpha^{l'}}\\
q_{N}(\sigma^l,\sigma^{l'})
\end{pmatrix}_{l,l'\geq 1}
\ee
under the measure $\E_{\Omega_N}$. Consider also the array

\be\label{gov}
\big(Q_{l,l'}\big)_{l,l'\geq1}\,=\,\big(\gamma_{\balpha^{l}\wedge\balpha^{l'}}\,+
\,q_{N,\gamma}(\sigma^l,\sigma^{l'})\big)_{l,l'\geq1}
\ee

and  a subsequence $(N_k)_\geq 1$ along which all the above arrays converges in  distribution to some arrays $Q,R^0,R^1$ w.r.t the measure induce by $H_N^{\mathrm{per}}$. By construction the above arrays satisfy    \textit{multi-scale Ghirlanda-Guerra} \eqref{GG} and then   we can apply the results of section \ref{sinc}.

In particular by the \textit{synchronization property} (Proposition \ref{sinc}) for any $s=0,1$ it holds

\be\label{sd}
R^{s}_{l,l'}\,=\,L_s(Q_{l,l'})
\ee

for some nondecreasing Lipschitz function $L_s$. \\

We denote by $\mu_{Q_{12}}$ the distribution of one element of the array $Q$. Let $k\geq 1$ be an integer and consider two sequences $\xi=(\xi_j)_{j\leq k}$ and $c=(c_j)_j\leq k$ such that

\be\label{zxseq}
0=\xi_{-1}<\xi_0< \xi_1<\ldots<\xi_{r-1} < \xi_{k}\,=\,1
\ee

and

\be\label{qseq}
0=c_0<c_1<\ldots< c_k=1+\gamma_{r} \ee

We choose the above couple $(\xi,c)$ such that its associated  discrete distribution $\xi_c$ defined by \eqref{pdis} is close to $\mu_{Q_{12}}$ in some metric that metrizes weak convergence of distributions.

Moreover by Proposition \ref{cprop} we know that the  $F_{Q_{12}}\in\mathcal{M}_\zeta[0,1+\gamma_r]$ then  we can assume without loss that the above $\xi$ satisfies the key property

\be\label{key3}
\zeta\subseteq \xi
\ee

Notice that \eqref{key3} implies that $k\geq r$. Let $(\nu_{\balpha})_{\balpha\in\N^k}$  be the random weights of the RPC associated to $\xi$ in \eqref{zxseq}.
By \eqref{GG}  the array $Q$  satisfies the Ghirlanda-Guerra identities and then Theorems 2.13 and 2.17 in \cite{panmulti} imply that its distribution can be well approximated by the  RPC associated to the above sequences   $\xi$ and $q$ . This means that if we consider a family $(\balpha_l)_{l\geq1}$ of i.i.d.  samples from $\N^{k}$ with distribution given by this RPC we have that the distribution of the  array

\be\label{sample}
(c_{\balpha^l\wedge\balpha^{l'}})_{l,l'}
\ee

will be close to the distribution of the array $Q$. For any $s=0,1$ we define a sequence

\be\label{q12}
q^s_j\,= L_s(c_j)\,\, \,0\leq j\leq k
\ee

then \eqref{sd} implies that for any $s=0,1$ the distribution of the array

\be\label{sample}
Q^s=\,(q^s_{\balpha^l\wedge\balpha^{l'}})_{l,l'}
\ee
will be close to the distribution of the array $R^s$ for any $s=0,1$.

We claim that the triple $(\xi,q^0,q^1)\in X_\beta$ where the set $X_\beta$ is defined in \eqref{spacef}.
In other words we can set $q^0\equiv\widetilde{\gamma}$ and $q^1\equiv q$ for some sequence $\widetilde{\gamma}$ and $q$  in \eqref{seqq}  and \eqref{seqc} respectively.\\

Since  we already know that $\xi$ satisfies \eqref{key3} it's enough to check that $q_0$ in \eqref{q12}
satisfies the condition \eqref{seqq}. For a given $l\leq r$ consider the sets $K_l$ and $A^0_l$
defined in \eqref{mergeset} and \eqref{al0} respectively. Then with probability one
\be\label{al3}
c_j\in A^0_l \Leftrightarrow j\in K_l
\ee

for any $j\leq k$ and any $l\leq r$. Hence, combining  \eqref{q12} and \eqref{al3} we obtain that with probability one if $j\in K_l $ then $q^0_j=\gamma_l$ which coincides with \eqref{seqq}.\\

Given  the above  triple $(\xi,q^0,q^1)\in X_\beta$ consider the Parisi functional $\mathcal{P}(x)$
in \eqref{parisifun2}. Notice that  the quantity $A_N$  \eqref{cavity2} and are $\mathcal{P}(x)$ represented by  the   same continuous functional of the distribution of the arrays
$(R^0,R^1)$ in  \eqref{bji} and $(Q^0,Q^1)$ in \eqref{sample}. Since by construction
these arrays are close in some metric that metrizes weak convergence of distributions
that one can use $\mathcal{P}_{\beta}(x)$ to approximate $A_N(X_N)$ as $N$ goes to infinity (see Section 3.6 in \cite{panbook}).  Hence by \eqref{ass2}

\be\label{final}
\liminf_{N\to\infty}\,p_N(\beta)\geq \inf_{x\in X_\beta}\,\mathcal{P}_{\beta}(x)
\ee
and this conclude the proof of Theorem \ref{upbound}.

In this work we have analysed a multi-scale spin-glass mean-field model and obtained a variational principle that provides the solution for the free energy density. As a bypass result we obtained a full factorisation scheme of ultrametric nature.
We plan to investigate how the multi-scale setting works with other mean-field cases, with hierarchical disordered models \cite{CP,CBG} as well as to extend its use to finite dimensional models where alternative notions of equilibrium state, like for instance the {\it metastate} \cite{SN}, have been advanced.\\ \\

{\bf Acknowledgements}\\
We want to thank several useful discussions with Diego Alberici, Francesco Guerra, Jorge Kurchan and especially Dmitry Panchenko whose observation led to a valuable improvement of proposition \ref{RPCcon}. P.C. was partially supported by PRIN project Statistical Mechanics and Complexity (2015K7KK8L), E.M. was partially supported by Progetto Almaidea 2018.

\section*{Appendix}\label{RPC}
For the benefit of the reader we summarise the main properties of Ruelle probability cascades used in the work. Here we follow Panchenko's monograph on the SK model \cite{panbook}. For the interested reader we also mention  the following works \cite{Ruelle,Bol,Arguin} on RPC and its applications to spin glasses theory.\\

Given an integer $r\geq1$ let $\mathcal{A}=\N^0\cup\N\cup\N^2\ldots\cup \N^r $ be a tree of depth $r$ and root
$\N^0=\{\emptyset\}$. A vertex $\balpha=(n_1,\ldots,n_p)\in\N^p$ for $1<p<r$ has children
$\balpha n= (n_1,\ldots,n_p,n)\in\N^{p+1}$. Therefore each vertex $\balpha=(n_1,\ldots ,n_p)$ is connected to the root
by the path
\be
p(\balpha)=\{n_1,(n_1,n_2), \ldots, (n_1,\ldots,n_p)\}
\ee
We denote by $|\balpha|$ the distance between $\balpha$ and $\empty$ namely the number of coordinates of $\balpha$,
thus by definition $\balpha\in\N^{|\balpha|}$. We also use the notation

\be\label{dwedge}
\balpha\wedge\bbeta=| p(\balpha)\cap p(\bbeta)|
\ee

Let  $\zeta=(\zeta_l)_{l=0,\ldots, r-1}$  be a sequence such that

\be\label{zetaa}
0<\zeta_0< \zeta_1<\ldots<\zeta_{r-1} < 1
\ee
We denote by $(\nu_{\balpha})_{\balpha\in\N^{r}}$ the random weights of the Ruelle probability cascade associated to
the sequence $\zeta$ (Section 2.3 in \cite{panbook}). \\

Consider a family of i.i.d. random variables $\omega= (\omega_p)_{1\leq p\leq r}$ that have the uniform distribution
on $[0,1]$ and some function $X_r=X_r(\omega)$ which satisfies $\E\exp\zeta_{r-1} X_r<\infty$. Let us define
recursively for $0\leq l\leq r-1$
\be\label{rpcrec}
X_l=X_l(\omega_1,\ldots,\omega_l)=\frac{1}{\zeta_l}\log\E_{l}\exp\zeta_l X_{l+1}
\ee

where $\E_{l}$ denotes the expectation with respect to $\omega_{l+1}$.\\

By definition  $X_0$ is not random,  moreover it can be represented trough Ruelle Probability Cascades.
Let $\omega_{\balpha\in\mathcal{A}\setminus\N^0} $ be a family of i.i.d. uniform $[0,1]$ and set
$\Omega_{\balpha}=(\omega_{\bbeta})_{\bbeta\in p(\balpha)}$. Theorem 2.9 in \cite{panbook} reads as follow

\be\label{rpcinv}
 X_0\,=\,\E\log\sum_{\balpha\in\N^r}\nu_{\balpha}\exp X_r\left(\Omega_{\balpha}\right)
\ee

Actually the same argument used in \cite{panbook} to prove \eqref{rpcinv} leads to a remarkable concentration result
for Ruelle probability cascades.

\begin{prop}\label{RPCcon}
For any $r\geq 1$  the random variable\\

\be\label{con1}
\phi_r = \log\sum_{\balpha\in \N^r} \nu_{\balpha} \exp X_r\left(\Omega_{\balpha}\right)
\ee

satisfies

\be\label{selfma}
\E\left(\,\phi_r-\E\, \phi_r\right)^2\, \leq \,4\,c(\zeta_0)
\ee

for some $c(\zeta_0)$ which doesn't depend on the distribution of $X_r$.

\end{prop}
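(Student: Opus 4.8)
The plan is to reuse the first — and only the first — step of the argument that proves the RPC invariance \eqref{rpcinv} in \cite{panbook}. Recall that the un-normalised RPC weights $(w_\balpha)_{\balpha\in\N^r}$, with $\nu_\balpha=w_\balpha\big/\sum_\bbeta w_\bbeta$, are built from nested Poisson processes: at the root one draws a Poisson point process on $(0,\infty)$ of intensity $\zeta_0\,x^{-1-\zeta_0}\,dx$ with atoms $(u_n)_{n\ge1}$, and to each vertex $n\in\N$ one attaches an independent copy of the depth-$(r-1)$ cascade associated to $\zeta_1,\dots,\zeta_{r-1}$, together with the fresh uniform labels $\omega_\bbeta$, $|\bbeta|\ge1$. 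Grouping the leaves by their first coordinate, and recalling that $\Omega_\balpha$ collects the uniform labels along the path to $\balpha$, I would write
\[
\phi_r=\log\sum_{n\ge1}u_n V_n-\log\sum_{n\ge1}u_n V_n',
\]
where $V_n$ (resp. $V_n'$) is the partition function of the subtree rooted at $n$, i.e. the sum over the leaves $\balpha$ with first coordinate $n$ of $(w_\balpha/u_n)\,e^{X_r(\Omega_\balpha)}$ (resp. of $w_\balpha/u_n$). From the construction, $(V_n)_{n\ge1}$ and $(V_n')_{n\ge1}$ are each i.i.d. sequences of strictly positive variables, and both are independent of $(u_n)_{n\ge1}$; under the standing hypothesis $\E\exp\zeta_{r-1}X_r<\infty$ all the sums are a.s. finite and $\phi_r$ is integrable, which is part of the standard setup in \cite{panbook} behind \eqref{rpcinv}.

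Next I would record the key distributional identity. By the marking theorem, for any i.i.d. positive sequence $(V_n)$ independent of $(u_n)$ with $\E V_1^{\zeta_0}<\infty$ the point process $(u_n V_n)_{n\ge1}$ is Poisson of intensity $\E[V_1^{\zeta_0}]\,\zeta_0\,x^{-1-\zeta_0}\,dx$; by the scaling invariance of the measure $\zeta_0 x^{-1-\zeta_0}dx$ this equals in law $\E[V_1^{\zeta_0}]^{1/\zeta_0}$ times a Poisson process of intensity $\zeta_0 x^{-1-\zeta_0}dx$. Denoting by $U$ the total mass of such a process, this gives
\[
\log\sum_{n\ge1}u_n V_n\;{\buildrel d\over=}\;\frac1{\zeta_0}\log\E[V_1^{\zeta_0}]\,+\,\log U,
\]
whence $\mathrm{Var}\!\big(\log\sum_{n}u_n V_n\big)=\mathrm{Var}(\log U)=:c(\zeta_0)$. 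The decisive feature is that this variance is frozen: it depends on $\zeta_0$ only, in particular not on the law of $V_1$ and hence not on the law of $X_r$. Finiteness of $\mathrm{Var}(\log U)$ follows since $U$ is, up to a constant, a positive $\zeta_0$-stable variable, whose left tail is super-exponentially light and whose right tail is a power law, so $\log U$ has moments of every order.

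I would then conclude in one line: applying the identity to $(V_n)$ and to $(V_n')$ gives $\mathrm{Var}\!\big(\log\sum_n u_n V_n\big)=\mathrm{Var}\!\big(\log\sum_n u_n V_n'\big)=c(\zeta_0)$, and since $\phi_r$ is the difference of these two random variables, Minkowski's inequality in $L^2(\PP)$ yields $\|\phi_r-\E\phi_r\|_2\le 2\sqrt{c(\zeta_0)}$, i.e. $\E(\phi_r-\E\phi_r)^2\le 4\,c(\zeta_0)$, which is \eqref{selfma}. The only substantive point — the reason the bound involves $\zeta_0$ alone and not the whole sequence $\zeta$ — is precisely this freezing of the variance after a single peeling step, so that all the randomness from levels $1,\dots,r-1$ enters $\log\sum_n u_n V_n$ only through the deterministic shift $\tfrac1{\zeta_0}\log\E[V_1^{\zeta_0}]$ and disappears upon centering; no induction on $r$ is needed. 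The remaining verifications are routine: the measurability and independence bookkeeping for the subtree decomposition, standard for RPC, and the a.s. finiteness/integrability above. I do not expect a serious obstacle; the only mild care is to confirm that the hypothesis on $X_r$ guarantees $\E V_1^{\zeta_0}<\infty$, but this is exactly the condition that $\sum_\balpha w_\balpha e^{X_r(\Omega_\balpha)}$ be a.s. finite and is already built into the framework of \eqref{rpcinv}.
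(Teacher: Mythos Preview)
Your argument is correct and reaches the same constant $4\,c(\zeta_0)$ as the paper, but the route is genuinely different. The paper first proves a lemma by induction on $r$: for the \emph{un-normalised} partition function $Y_r=\log\sum_{\balpha\in\N^r} w_\balpha Z_\balpha e^{X_r(\Omega_\balpha)}$ (with auxiliary i.i.d.\ factors $Z_\balpha$ needed to make the induction close), one peels the deepest level using the invariance property at parameter $\zeta_{r-1}$, reducing to depth $r-1$ with new marks $U_\balpha$; the base case $r=1$ then gives $\mathrm{Var}(Y_r)=c(\zeta_0)$ exactly. Only afterwards does the paper write $\phi_r$ as the difference of two such un-normalised logarithms and apply $(a+b)^2\le 2a^2+2b^2$.

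You instead peel once at the \emph{root}: grouping leaves by their first coordinate turns the whole depth-$(r-1)$ subtree into a single i.i.d.\ multiplicative mark $V_n$, and one application of the marking theorem at parameter $\zeta_0$ gives $\log\sum_n u_n V_n \stackrel{d}{=} \mathrm{const}+\log U$, freezing the variance to $c(\zeta_0)$ immediately. This bypasses the induction entirely and explains more transparently why only $\zeta_0$ enters the bound. The price is that the moment condition $\E V_1^{\zeta_0}<\infty$ now hides the full recursive structure; you are right that it follows from the standing hypothesis $\E e^{\zeta_{r-1}X_r}<\infty$ via the same chain of estimates underlying \eqref{rpcinv}, but this is precisely the content the paper's induction makes explicit. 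The final Minkowski step is the same in both proofs.
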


\begin{proof}

Let  $(\nu_{\balpha})_{\balpha\in\N^{r}}$ be the random weights of the Ruelle Probability Cascade associated to the
sequence $\zeta$ in \eqref{zetaa} that we rewrite as

\be\label{conp}
\nu_{\balpha} \,=\,\dfrac{w_{\balpha}}{\sum_{\balpha\in\N^r} w_{\balpha} }
\ee

where the weights $w_{\balpha}$ are defined in section 2.3 of \cite{panbook}.   Let us start with the following lemma.

\begin{lemma}\label{rf}
Let $Z >0$  be a random variable such that $\E Z^{\zeta_{r-1}}<\infty$
and let $(Z_{\balpha} )_{\balpha\in \N^r}$ be a sequence of i.i.d. copies of $Z$ independent of all other random
variables. For any $r\geq 1$  let

\be
Y_r = \log\sum_{\balpha\in \N^r} w_{\balpha} Z_{\balpha}\exp X_r\left(\Omega_{\balpha}\right)
\ee

Then the following holds

\be\label{selfma}
\E\left(\,Y_r-\E\, Y_r\right)^2\, =\,c(\zeta_0)<\infty
\ee

for some $c(\zeta_0)$ which doesn't depend on the distribution of $X_r$ and $Z$.

\end{lemma}

\begin{proof}
The proof is by induction on $r$. Consider the case $r=1$ then

\be
Y_1 = \log\sum_{n\geq1} w_{n} Z_{n}\exp \left(X_1\left(\Omega_{n}\right)\right)
\ee

The invariance property of the Poisson Dirichelet process (Theorem 2.6 in \cite{panbook}) implies that

\be\label{invd}
\sum_{n\geq1} w_{n} Z_{n}\exp \left(X_1(\Omega_{n})\right) \,\,{\buildrel d \over =}\, \,C \sum_{n\geq1} w_{n}
\ee

where $C=\Big(\E (Z \exp(X_1))^{\zeta_0}\Big)^{1/\zeta_0}$. Since

\be
\E\left(\,Y_1-\E\, Y_1\right)^2\, =\,\E\,\left(\,\log\sum_{n\geq 1} w_n Z_{n}\exp \left(X_1(\Omega_{n})\right)-\E\,
\log\sum_{n\geq1} w_n Z_{n}\exp \left(X_1(\Omega_{n})\right) \right)^2
\ee

one can use the invariance property \eqref{invd}  in the r.h.s of the above line obtaining

\be
\E\left(\,Y_1-\E\, Y_1\right)^2\,= \,\E\,\left(\,\log (C\,\sum_{n\geq1} w_n)\, -\,\E\, \log (C\,\sum_{n\geq1} w_n)
\right)^2\,=\,\E\,\left(\,\log\sum_{n\geq1} w_n -\E\, \log\sum_{n\geq1} w_n \right)^2 \ee

Finally  the same argument of Lemma 2.2 in \cite{panbook} implies that

$$\E\,\,\log\sum_{n\geq1} w_n< \infty\,\,\,\, , \,\,\,\E\,\left(\log\sum_{n\geq1} w_n\right)^2 < \infty$$  Therefore we can set
\be
\E\,\left(\,\log\sum_{n\geq1} w_n -\E\, \log\sum_{n\geq1} w_n \right)^2\,=\,c(\zeta_0)<\infty
\ee

for some $c(\zeta_0)$ that doesn't depends on the distribution of $X_1$ and $Z$.\\

Now for an  arbitrary integer $r>1$ consider the quantity

\be\label{pself}
\E\left(\,Y_r-\E\, Y_r\right)^2\,=\,\E\left( \,\log\sum_{\balpha\in \N^r} w_{\balpha} Z_{\balpha}\exp
X_r\left(\Omega_{\balpha}\right)-\E\,\log\sum_{\balpha\in \N^r} w_{\balpha} Z_{\balpha}\exp
X_r\left(\Omega_{\balpha}\right)\right)^2
\ee

The invariance property  (2.57) in \cite{panbook} implies that

\be\label{invd2}
\sum_{\balpha\in \N^r} w_{\balpha} Z_{\balpha}\exp \left(X_r(\Omega_{\balpha})\right) \,\,{\buildrel d \over =}\,
C\,\sum_{\balpha\in \N^{r-1}} w_{\balpha} U_{\balpha}\exp \left(X_{r-1}(\Omega_{\balpha})\right)
\ee

where $C=(\E Z^{\zeta_{r-1}})^{1/\zeta_{r-1}}$ and $U_{\balpha}=\sum_{n\geq1} u_{\balpha n}$
with $\E U_{\balpha}^{\zeta_{r-2}}<\infty$.

Then using \eqref{invd2} in the r.h.s. of \eqref{invd2} we obtain

\be\label{pself}
\E\left(\,Y_r-\E\, Y_r\right)^2\,=\,\E\left( \log\sum_{\balpha\in \N^{r-1}} w_{\balpha} U_{\balpha}\exp
X_{r-1}\left(\Omega_{\balpha}\right)-\E\log\sum_{\balpha\in \N^{r-1}} w_{\balpha} U_{\balpha}\exp
X_{r-1}\left(\Omega_{\balpha}\right)\right)^2
\ee

Finally notice that the above equation is of the same type of \eqref{selfma} with $r$ replaced by $r-1$ and
$Z_{\balpha}$ by $U_{\balpha}$ and this conclude the proof by induction.

\end{proof}

Let's go back to the proof of Proposition \ref{RPCcon}. By \eqref{conp} we can rewrite $\phi_r$ as

\be
\phi_r \,=\, \widetilde{\phi}_r\,+ \log\sum_{\balpha\in \N^r} w_{\balpha}
\ee

where

\be
\widetilde{\phi}_r = \log\sum_{\balpha\in \N^r} w_{\balpha} \exp \left(X_r\left(\Omega_{\balpha}\right)\right)
\ee
Then we can write

\be\label{con2}
\E\left(\,\phi_r-\E\, \phi_r\right)^2\, \leq \,2\,
\E\left(\,\widetilde{\phi}_r-\E\, \widetilde{\phi}_r\right)^2\, + 2 \,\E\,\left(\,\log\sum_{\balpha\in \N^r}
w_{\balpha} \,-\,\E\, \log\sum_{\balpha\in \N^r} w_{\balpha}  \right)^2
\ee

Notice that we can apply Lemma \ref{rf} to compute the two terms in the r.h.s of \eqref{con2}  and this concludes the proof.
\end{proof}

In this work we will use \eqref{rpcinv} also in the following particular setting. Let  $q=(q_l)_{l=0,\ldots, r}$ be a sequence
such that
\be\label{seq3}
0= q_0 < q_1<\ldots<q_{r} < \infty
\ee

and let $(J_l)_{1\leq l\leq r}$ be a family of i.i.d. standard gaussian.

Consider a gaussian random variable

\be\label{hrpc}
H_r =\sum_ {1\leq l\leq r }G_l
\ee

\be\label{Gran}
G_l = J_l\,\left(q_{l}-q_{l-1}\right)^{1/2}
\ee

The covariance of $G$ is given by

\be
\E\,G_l\,G_l'\,=\, \delta_{l,l'} (q_l-q_{l-1})
\ee

Consider the recursive construction \eqref{rpcrec} starting from

\be
X_r\,=\, F(H_r)
\ee

for some function $F$ that satisfies  $\E\exp\zeta_{r-1} X_r<\infty$. Consider  gaussian process $g$ on $\N^{r}$
defined by

\be\label{gtilde}
g(\balpha)\,=\, \sum_{\bbeta\in p(\balpha)}\,J_{\bbeta}\,\left(q_{|\bbeta|}-q_{|\bbeta|-1}\right)^{1/2}
\ee

where $(J_{\balpha})_{\balpha\in\mathcal{A}\setminus \N^0}$ is a family of  i.i.d. standard gaussian random variables.
The covariance of the process $g$ is

\be
\E \,g(\balpha)\,g(\bbeta)\,=\,q_{\balpha\wedge\bbeta}
\ee

Then \eqref{rpcinv} implies that

\be\label{rpcinv2}
 X_0\,=\,\E\log\sum_{\balpha\in\N^r}\nu_{\balpha}\exp F \left(g(\balpha)\right)
\ee

Suppose that instead of \eqref{seq3} we have that $q_l=q_{l-1}$  for some $l\in \{1,\ldots, r\}$.
Then the random variable $G_l$ in \eqref{Gran} is actually a centered gaussian with zero variance, namely
its distribution is a Dirac delta at the origin. This implies that one can set $G_l\equiv 0$ and forget the average
$E_{l-1}$ getting $X_{l-1}=X_l$. In other words $X_0$ can represented using a new Ruelle Probability Cascade
$(\widetilde{\nu}_{\balpha})_{\balpha\in\N^{r-1}}$  that is obtained from
$(\nu_{\balpha})_{\balpha\in\N^r} $  dropping  the point process associated to the intensity $\zeta_{l-1}$.

Formally  we consider the sequence  $\widetilde{\zeta}=\zeta\,\setminus\,\{\zeta_{l-1}\}$  and denote by
$(\widetilde{\nu}_{\balpha})_{\balpha\in\N^{r-1}}$ the random weights of the Ruelle Probability Cascade associated to
the sequence $\widetilde{\zeta}$. Let $\phi$ the one-to-one map between the sets
$\{0,\ldots,r\}\setminus \{l-1\}$ and $\{0,\ldots,r-1\}$ and replace   $H_r$ in \eqref{hrpc}   with
$\widetilde{H}_{r-1} =\sum_ {1\leq l'\neq l\leq r }G_{\phi(l')}$ and  starting from
$\widetilde{X}_{r-1}=F(\widetilde{H}_{r-1})$ we recursively define
\be\label{rpcrec}
\widetilde{X}_{\phi(l')}\,=\,\frac{1}{\zeta_l'}\log\E_{l}\exp\zeta_l' \widetilde{X}_{\phi(l'+1)}
\ee
for any $0 < l'\neq l< r-1$. The it holds

\be\label{rpcinv3}
X_0\,=\,\widetilde{X}_0\,=\,\,\E\log\sum_{\balpha\in\N^{r-1}}\widetilde{\nu}_{\balpha}\exp F
\left(\widetilde{g}(\balpha)\right)
\ee
where $\widetilde{g}(\balpha)$ is defined as in \eqref{gtilde}.

\end{document}